\documentclass[a4paper]{article}
\usepackage{latexsym,amsthm,amsmath,amssymb}

\theoremstyle{plain}
\newtheorem{theorem}{Theorem}
\newtheorem*{theorem*}{Theorem}
\newtheorem{lemma}[theorem]{Lemma}
\newtheorem*{lemma*}{Lemma}
\newtheorem{corollary}[theorem]{Corollary}
\newtheorem{observation}[theorem]{Observation}
\newtheorem*{observation*}{Observation}

\newtheorem*{conjecture*}{Conjecture}
\newtheorem{rrule}{Reduction Rule}
\theoremstyle{definition}
\newtheorem{definition}[theorem]{Definition}

%\renewcommand{\refname}{\small References}
%\addto{\captionsenglish}{\renewcommand{\refname}{\normalsize
  %  References}}

\usepackage{authblk}
\usepackage[utf8x]{inputenc}
\usepackage[english]{babel}
\usepackage[numbers,sort]{natbib}
\usepackage{graphicx}
\usepackage{svg}
\usepackage{pbox}
\usepackage{type1ec}
\usepackage[T1]{fontenc}
\usepackage{bbm}
\usepackage{mathtools}
\usepackage{amssymb}
\usepackage{amsmath}
\usepackage{amsthm}
\usepackage{amsfonts}
\usepackage{pifont}
\usepackage{url}
\usepackage{multirow}
\usepackage{multicol}
\usepackage{fancybox}
\usepackage{colortbl}
\usepackage{soul}
\usepackage{color}
\usepackage{tikz}
\usepackage{tkz-berge}
\usepackage{rotating}
\usepackage{pdflscape}
\usepackage{afterpage}
\usepackage{comment}
\usepackage{etoolbox}
\usepackage{environ}
\usepackage{dirtytalk}

\usepackage{algorithm}
\usepackage{algorithmic}

\usetikzlibrary{decorations,arrows,shapes}
\newcommand{\inners}{1.2pt}
\newcommand{\outers}{1pt}

\newcommand{\td}[1]{\mathbb{#1}}

% complexity
\usepackage{complexity}
\newclass{\Hard}{hard}
\newclass{\pNP}{paraNP}
\newclass{\Hness}{hardness}
\newcommand{\NPH}{\NP\text{-}\Hard}
\newcommand{\pNPH}{\pNP\text{-}\Hard}
\newcommand{\pNPHness}{\pNP\text{-}\Hness}
\newclass{\Complete}{complete}
%\newclass{\Completo}{Completo}
%\newclass{\Ctude}{Completude}
\newclass{\Cness}{completeness}
\newcommand{\NPc}{\NP-\Complete}

\newfunc{\YES}{YES}
\newfunc{\NOi}{NO}
\newfunc{\tw}{tw}
\newcommand{\pname}[1]{\textsc{#1}}
\newcommand{\WH}[1]{\W[#1]\text{-}\Hard}
\newcommand{\WHness}[1]{\W[#1]\text{-}\Hness}

\newfunc{\dist}{dist}
\newfunc{\diam}{diam}

\newfunc{\indtree}{indtree}
\newfunc{\cycle}{cycle}
\newfunc{\conn}{conn}
\newfunc{\MSOx}{MSO}
\newcommand{\MSO}[1]{$\MSOx_{#1}$}

\newfunc{\Rep}{Rep}
\newfunc{\DPx}{DP}

\newfunc{\leavesx}{leaves}
\newcommand{\leaves}[1]{\leavesx\left(#1\right)}

\newfunc{\opt}{opt}
\newfunc{\rmcx}{rmc}
\newcommand{\union}{\uplus}
\newcommand{\bigunion}{\biguplus}
\newfunc{\ins}{ins}
\newfunc{\shift}{shift}
\newfunc{\glue}{glue}
\newfunc{\proj}{proj}
\newfunc{\joinf}{join}

\newcommand{\join}{\sqcup}
\newcommand{\meet}{\sqcap}
\newcommand{\rmc}[1]{\rmcx\left(#1\right)}

\newcommand{\bigO}[1]{{\mathcal{O}\!\left(#1\right)}}

\newcommand{\nproblem}[3]{{\centering\fbox{\pbox{\textwidth}{\pname{#1}\\\textit{Instance}: #2\\\textit{Question}: #3}}}}

\NewEnviron{sproof}[1] {
    \begin{proof}[Proof of safeness of Rule~#1]\BODY\end{proof}
}

% The following is enclosed to allow easy detection of differences in
% ascii coding.
% Upper-case    A B C D E F G H I J K L M N O P Q R S T U V W X Y Z
% Lower-case    a b c d e f g h i j k l m n o p q r s t u v w x y z
% Digits        0 1 2 3 4 5 6 7 8 9
% Exclamation   !           Double quote "          Hash (number) #
% Dollar        $           Percent      %          Ampersand     &
% Acute accent  '           Left paren   (          Right paren   )
% Asterisk      *           Plus         +          Comma         ,
% Minus         -           Point        .          Solidus       /
% Colon         :           Semicolon    ;          Less than     <
% Equals        =           Greater than >          Question mark ?
% At            @           Left bracket [          Backslash     \
% Right bracket ]           Circumflex   ^          Underscore    _
% Grave accent  `           Left brace   {          Vertical bar  |
% Right brace   }           Tilde        ~

\title{FPT and kernelization algorithms for the k-in-a-tree problem}
\date{}

\author[1]{Guilherme C. M. Gomes}
%\ead{gcm.gomes@dcc.ufmg.br}
\author[1]{Vinicius F. dos Santos}
%\ead{viniciussantos@dcc.ufmg.br}
\author[2]{Murilo V. G. da Silva}
\author[3,4]{Jayme L. Szwarcfiter}

\affil[1]{Departamento de Ci\^encia da Computa\c{c}\~{a}o, Universidade Federal de Minas Gerais -- Belo Horizonte, Brazil}
\affil[2]{Departamento de Inform\'atica, Universidade Federal do Paran\'{a} -- Curitiba, Brazil}
\affil[3]{Universidade Federal do Rio de Janeiro -- Rio de Janeiro, Brazil}
\affil[4]{Universidade do Estado do Rio de Janeiro -- Rio de Janeiro, Brazil}

\begin{document}

\maketitle

\begin{abstract}
    The \pname{three-in-a-tree} problem asks for an induced tree of the input graph containing three mandatory vertices.
    In 2006, Chudnovsky and Seymour [Combinatorica, 2010] presented the first polynomial time algorithm for this problem, which has become a critical subroutine in many algorithms for detecting induced subgraphs, such as beetles, pyramids, thetas, and even and odd-holes.
    In 2007, Derhy and Picouleau [Discrete Applied Mathematics, 2009] considered the natural generalization to $k$ mandatory vertices, proving that, when $k$ is part of the input, the problem is \NPc, and ask what is the complexity of \pname{four-in-a-tree}.
    Motivated by this question and the relevance of the original problem, we study the parameterized complexity of $k$-\pname{in-a-tree}.
    We begin by showing that the problem is \WH{1}\ when jointly parameterized by the size of the solution and minimum clique cover and, under the Exponential Time Hypothesis, does not admit an $n^{o(k)}$ time algorithm.
    Afterwards, we use Courcelle's Theorem to prove fixed-parameter tractability under cliquewidth, which prompts our investigation into which parameterizations admit single exponential algorithms; we show that such algorithms exist for the unrelated parameterizations treewidth, distance to cluster, and distance to co-cluster.
    In terms of kernelization, we present a linear kernel under feedback edge set, and show that no polynomial kernel exists under vertex cover nor distance to clique unless $\NP \subseteq \coNP/\poly$.
    Along with other remarks and previous work, our tractability and kernelization results cover many of the most commonly employed parameters in the graph parameter hierarchy.
\end{abstract}

% DO NOT REMOVE: Creates space for Elsevier logo, ScienceDirect logo
% and ENDM logo

%\input{tw}

\section{Introduction}

Given a graph $G = (V,E)$ and a subset $K \subseteq V(G)$ of size three -- here called the set of terminal vertices -- the \pname{three-in-a-tree} problem consists of finding an induced tree of $G$ that connects $K$.
Despite the novelty of this problem, it has become an important tool in many detection algorithms, where it usually accounts for a significant part of the work performed during their executions.
It was first studied by \citeauthor{three_in_a_tree_comb}~\cite{three_in_a_tree_comb} in the context of theta and pyramid detection, the latter of which is a crucial part of perfect graph recognition algorithms~\cite{perfect_graphs} and the former was an open question of interest~\cite{theta_prism_chud}.
Across more than twenty pages, \citeauthor{three_in_a_tree_comb} characterized all pairs $(G,K)$ that do not admit a solution, which resulted in a $\bigO{mn^2}$ time algorithm for the problem on $n$-vertex, $m$-edge graphs.
Since then, \pname{three-in-a-tree} has shown itself as a powerful tool, becoming a crucial subroutine for the fastest known even-hole~\cite{even_hole}, beetle~\cite{even_hole}, and odd-hole~\cite{odd_hole} detection algorithms; to the best of our knowledge, these algorithms often rely on reductions to multiple instances of \pname{three-in-a-tree}, e.g. the theta detection algorithm has to solve $\bigO{mn
^2}$ \pname{three-in-a-tree} instances to produce its output~\cite{super_three_in_a_tree}.
Despite its versatility, \pname{three-in-a-tree} is not a silver bullet, and some authors discuss quite extensively why they think \pname{three-in-a-tree} cannot be used in some cases~\cite{net_subdivision,unichord_free}.
Nevertheless, \citeauthor{super_three_in_a_tree}~\cite{super_three_in_a_tree} very recently made a significant breakthrough and managed to reduce the complexity of \citeauthor{three_in_a_tree_comb}'s algorithm for \pname{three-in-a-tree} to $\bigO{m \log^2 n}$, effectively speeding up many major detection algorithms, among other improvements to the number of \pname{three-in-a-tree} instances required to solve some other detection problems.

As pondered by \citeauthor{super_three_in_a_tree}~\cite{super_three_in_a_tree}, the usage of \pname{three-in-a-tree} as a go-to solution for detection problems may, at times, seem quite unnatural.
In the aforementioned cases, one could try to tackle the problem by looking for constant sized minors or disjoint paths between terminal pairs and then resort to \citeauthor{disjoint_path}'s~\cite{disjoint_path} quadratic algorithm to finalize the detection procedure.
The problem is that neither the minors nor the disjoint paths are guaranteed to be induced; to make the situation truly dire, this constraint makes even the most basic problems \NPH.
For instance, \citeauthor{bienstock}~\cite{bienstock, bienstock_corr} proved that \pname{two-in-a-hole} and \pname{three-in-a-path} are \NPc.
As such, it is quite surprising that \pname{three-in-a-tree} can be solved in polynomial time and be of widespread importance.
It is worth to note that the induced subgraph constraint is also troublesome from the parameterized point of view. \pname{Maximum Matching}, for instance, can be solved in polynomial time~\cite{edmonds}, but if we impose that the matching must be induced subgraph, the problem becomes \WH{1}\ when parameterized by the minimum number of edges in the matching~\cite{induced_matching}.

Naturally, we may wonder how far we may push for polynomial time algorithms when considering larger numbers of terminal vertices, i.e we are interested in the complexity of \pname{$k$-in-a-tree} for $k \geq 4$.
The first authors to examine this problem were \citeauthor{induced_trees_complexity}, who proved in \cite{induced_trees_complexity} that \pname{$k$-in-a-tree} is \NPc\ when the number of terminals is part of the input even on planar bipartite cubic graphs of girth four, but solvable in polynomial time if the girth of the graph is larger than the number terminals.
A few years later, \citeauthor{four_in_a_tree_triangle}
~\cite{four_in_a_tree_triangle} showed that \pname{four-in-a-tree} is solvable in triangle-free graphs, while \citeauthor{k_in_a_tree_girth}
~\cite{k_in_a_tree_girth} proved that so is \pname{$k$-in-a-tree} on graphs of girth at least $k \geq 5$; their combined results imply that \pname{$k$-in-a-tree} on graphs of girth at least $k$ is solvable in polynomial time.
In terms of the \pname{$k$-in-a-path} problem,
\citeauthor{induced_trees_complexity}~\cite{induced_trees_complexity} argued that their hardness reduction also applies to this problem and showed that \pname{three-in-a-path} is \NPc\ even on graphs of maximum degree three.
\citeauthor{k_in_a_path_claw}~\cite{k_in_a_path_claw} proved that \pname{$k$-in-a-path}, \pname{$k$-Induced Disjoint Paths}, and \pname{$k$-in-a-cycle} can be solved in polynomial time on claw-free graphs for every fixed $k$, but all of them are \NPc\ when $k$ is part of the input even on line graphs; in fact, they proved that the previous problems are in \XP\ when parameterized by the number of terminals on claw-free graphs.
Another related problem to \pname{$k$-in-a-tree} is the well known \pname{Steiner Tree}, where we want to find a subtree of the input with cost at most $w$ connecting all terminals.
Being one of Karp's 21 \NPH\ problems~\cite{karp_21}, \pname{Steiner Tree} has received a lot of attention over the decades.
Relevant to our discussion, however, is its parameterized complexity.
When parameterized by the number of terminals, it admits a single exponential time algorithm~\cite{steiner_tree}; the same was proven to be true when treewidth~\cite{treewidth} is the parameter~\cite{lattice_algebra}.
On the other hand, when parameterized by cliquewidth~\cite{cliquewidth}, it is \pNPH\ since it is \NPH\ even on cliques: we may reduce from \pname{Steiner~Tree} itself and add, for each non-edge of the input, an edge of cost $w+1$.
As we see below, our first two results are in complete contrast with the parameterized complexity of \pname{Steiner~Tree}.

\smallskip\noindent
\textbf{Our results.} We concern ourselves with the parameterized complexity of \pname{$k$-in-a-tree}.
We begin by presenting some algorithmic results for \pname{$k$-in-a-tree} in Section~\ref{sec:natural_cw}, showing that the latter is \WH{1}\ when simultaneously parameterized by the number of vertices in the solution and size of a minimum clique cover and, moreover, does not admit an $n^{o(k)}$ time algorithm unless the Exponential Time Hypothesis~\cite{eth} (ETH) fails.
This partially answers a (generalization) of \citeauthor{induced_trees_complexity}'s question about the complexity of \pname{$k$-in-a-tree}, in the sense that there is very little hope of obtaining an algorithm that runs in polynomial time only on the size of the input graph.
On the positive side, we prove tractability under cliquewidth using Courcelle's Theorem~\cite{courcelle_book}, which prompts us, in Section~\ref{sec:single_exp}, to turn our attention to which parameters allow us to devise single exponential time algorithms for \pname{$k$-in-a-tree}.
Using \citeauthor{lattice_algebra}'s dynamic programming optimization machinery~\cite{lattice_algebra}, we show that such algorithms exist under treewidth, distance to cluster, and distance to co-cluster.
In Section~\ref{sec:fes}, we present a kernel with $16q$ vertices and $17q$ edges when we parameterize \pname{$k$-in-a-tree} by the size $q$ of a minimum feedback edge set.
In Section~\ref{sec:klb} we prove that the problem does not admit a polynomial kernel when parameterized by bandwidth, nor when simultaneously parameterized by the size of the solution, diameter, and distance to any graph class of your choosing.
In particular, the latter shows that \pname{$k$-in-a-tree} does not admit a polynomial kernel when parameterized by vertex cover nor when parameterized by distance to clique.
All our negative kernelization results are obtained assuming $\NP \nsubseteq \coNP/\poly$.
In terms of tractability and kernelization, our results encompass most of the commonly employed parameters of \citeauthor{gpp}'s graph parameter hierarchy~\cite{gpp}; we present a summary of our results in Figure~\ref{fig:diagram}.
To see why the distance to solution parameter sits between vertex cover and feedback vertex set, we refer to the end of Section~\ref{sec:natural_cw}.

\begin{figure}[!htb]
    \hspace{-0.52cm}
    \includegraphics[scale=0.5]{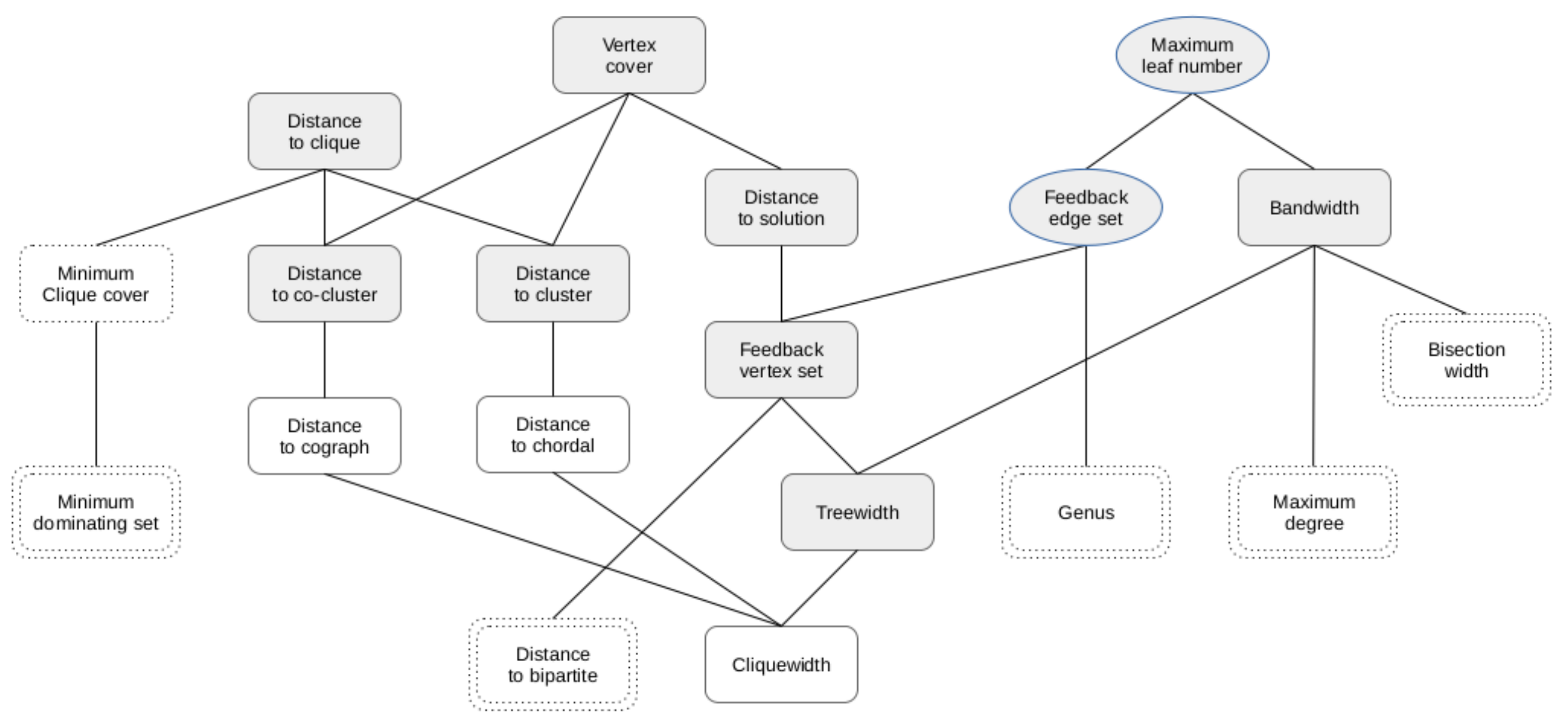}
    \caption{Hasse diagram of graph parameters and associated results for \pname{$k$-in-a-tree}. 
    Parameters surrounded by shaded ellipses have both single exponential time algorithms and polynomial kernels.
    Solid boxes represent parameters under which the problem is \FPT\ but does not admit polynomial kernels; if the box is shaded, we have a single exponential time algorithm for that parameterization. A single dashed box corresponds to a \WH{1} parameterization, while double dashed boxes surround parameters under which the problem is \pNPH.
    Aside from the \pNPHness\ for genus, maximum degree, and distance to bipartite, all results are original contributions proposed in this work.}
    \label{fig:diagram}
\end{figure}

\section{Preliminaries}
\label{sec:prelim}

We refer the reader to~\cite{cygan_parameterized} for basic background on parameterized complexity, and recall here only some basic definitions.
A \emph{parameterized problem} is a language $L \subseteq \Sigma^* \times \mathbb{N}$. 
For an instance $I=(x,q) \in \Sigma^* \times \mathbb{N}$, $q$ is called the \emph{parameter}. 
A parameterized problem is \emph{fixed-parameter tractable} (\FPT) if there exists an algorithm $\mathcal{A}$, a computable function $f$, and a constant $c$ such that given an instance $(x,q)$, $\mathcal{A}$ correctly decides whether $I \in L$ in time bounded by $f(q) \cdot |I|^c$; in this case, $\mathcal{A}$ is called an \emph{\FPT\ algorithm}.
A fundamental concept in parameterized complexity is that of \emph{kernelization}; see~\cite{book_kernels} for a recent book on the topic.
A kernelization	algorithm, or just \emph{kernel}, for a parameterized problem $\Pi$ takes an instance~$(x,q)$ of the problem and, in time polynomial in $|x| + q$, outputs an instance~$(x',q')$ such that $|x'|, q' \leqslant g(q)$ for some function~$g$, and $(x,q) \in \Pi$ if and only if $(x',q') \in \Pi$.
Function~$g$ is called the \emph{size} of the kernel and may be viewed as a measure of the ``compressibility'' of a problem using polynomial-time pre-processing rules.
A kernel is called \emph{polynomial} (resp. \emph{quadratic, linear}) if $g(q)$ is a polynomial (resp. quadratic, linear) function in $q$.
A breakthrough result of \citeauthor{distillation}~\cite{distillation} gave the first framework for proving that some parameterized problems do not admit polynomial kernels, by establishing so-called \emph{composition algorithms}.
Together with a result of \citeauthor{fortnow_santh}~\cite{fortnow_santh}, this allows to exclude polynomial kernels under the assumption that $\NP \nsubseteq \coNP/\poly$, otherwise implying	a collapse of the polynomial hierarchy to its third level~\cite{uniform_non_uniform}.

All graphs in this work are finite and simple.
We use standard graph theory notation and nomenclature for our parameters; for any undefined terminology in graph theory we refer to~\cite{murty}.
We denote the degree of vertex $v$ on graph $G$ by $\deg_G(v)$, and the set of natural numbers  $\{1, 2, \dots, t\}$ by $[t]$.
A graph is a \emph{cluster graph} if each of its connected components is a clique, while a \emph{co-cluster graph} is the complement of a cluster graph.
The \textit{distance to cluster} (\textit{co-cluster}) of a graph $G$, is the size of the smallest set $U \subseteq V(G)$ such that $G \setminus U$ is a cluster (co-cluster) graph.
As defined in~\cite{cai_split}, a set $U \subseteq V(G)$ is an $\mathcal{F}$-\textit{modulator} of $G$ if $G \setminus U$ belongs to the graph class $\mathcal{F}$.
When the context is clear, we omit the qualifier $\mathcal{F}$.
For cluster and co-cluster graphs, one can decide if $G$ admits a modulator of size $q$ in time \FPT\ on $q$~\cite{clusterFPT}.

\section{Fixed-parameter tractability and intractability}
\label{sec:natural_cw}

While it has been known for some time that \pname{$k$-in-a-tree} is \NPc\ even on planar bipartite cubic graphs, it is not known to be even in \XP\ when parameterized by the natural parameter, the number of terminals. 
We take a first step with a negative result about this parameterization, ruling out the existence of an \FPT\ algorithm unless $\FPT = \W[1]$; in fact, we show for stronger parameterization: the maximum size of the induced tree that should contain the set of $k$ terminal vertices $K$ and the size of a minimum clique cover. 

\begin{theorem}
    \pname{$k$-in-a-tree} is $\W[1]$-\Hard\ when simultaneously parameterized by the number of vertices of the induced tree and size of a minimum clique cover.
    Moreover, unless ETH fails, there is no $n^{o(k)}$ time algorithm for \pname{$k$-in-a-tree.}
\end{theorem}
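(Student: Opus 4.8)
The plan is to give a polynomial-time parameterized reduction from \pname{Multicolored Independent Set}, which is \WH{1}\ parameterized by the number $k$ of color classes and, under ETH, has no $n^{o(k)}$ time algorithm. Given an instance with $V(H) = V_1 \uplus \dots \uplus V_k$, I would build $G$ as follows: take a \emph{hub path} $h_1 - h_2 - \dots - h_k$; for each color $i$, add a clique $X_i = \{x_v^i : v \in V_i\}$, make $h_i$ complete to $X_i$, and attach a pendant vertex $t_i$ adjacent to all of $X_i$; finally, for every edge $uv \in E(H)$ with $u \in V_i$ and $v \in V_j$ (so $i \neq j$), add the \emph{cross edge} $x_u^i x_v^j$. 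The terminal set is $K = \{h_1, \dots, h_k\} \cup \{t_1, \dots, t_k\}$, so $|K| = 2k$; the target tree size will be $s = 3k$.

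For the forward direction, if $v_1, \dots, v_k$ is a multicolored independent set of $H$, then the caterpillar $T$ formed by the hub path together with the pendant paths $h_i - x_{v_i}^i - t_i$ has exactly $3k$ vertices, contains $K$, and is induced: the only edges of $G$ among its vertices that are not tree edges would be cross edges $x_{v_i}^i x_{v_j}^j$, and these are absent precisely because $\{v_1, \dots, v_k\}$ is independent.

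The delicate direction, and the main obstacle, is proving that \emph{every} induced tree $T$ containing $K$ is such a caterpillar on $3k$ vertices; the danger is that $T$ might ``cheat'' by routing through a cross edge rather than through some hub. This is blocked by two design choices. First, since the hubs are terminals, all of $h_1, \dots, h_k$ lie in $T$, and as consecutive hubs are adjacent in $G$ the whole hub path is a subgraph of $T$ (and these are the only edges among hubs). Second, for each $i$ the neighbourhood of the terminal $t_i$ is contained in $X_i$, so $T$ contains at least one vertex of $X_i$; it contains at most one, since two vertices of the clique $X_i$ together with $h_i \in T$ would induce a triangle. Write $x^i$ for this unique vertex, and note $h_i x^i \in E(T)$. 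If some cross edge $x^i x^j$ (say $i<j$) existed in $G$, it would lie in $T$, and then the hub subpath from $h_i$ to $h_j$ together with the edges $h_i x^i$, $x^i x^j$, $x^j h_j$ would close a cycle in $T$, a contradiction. Hence no two of the $v_i$ are adjacent in $H$, so they form a multicolored independent set; and since the only vertices of $T$ are the hubs, the $x^i$'s and the $t_i$'s, we have $|V(T)| = 3k$ exactly.

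Finally I read off the parameters. Every solution has exactly $3k = \Theta(k)$ vertices, so the instance is equivalent to asking for an induced tree on at most $3k$ vertices containing $K$. Moreover the clique cover number of $G$ is at most $2k$: the sets $X_i \cup \{t_i\}$ are cliques covering all of $V(G)$ except the $k$ hubs, which we cover by singletons. Thus solution size, clique cover number, and $k$ are all $\Theta(k)$, which gives \WH{1}-hardness for the joint parameterization. Since the reduction is polynomial and produces $\Theta(k)$ terminals, an $n^{o(k)}$ time algorithm for \pname{$k$-in-a-tree} would solve \pname{Multicolored Independent Set} in time $n^{o(k)}$, contradicting ETH.
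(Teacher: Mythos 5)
Your proof is correct and follows essentially the same route as the paper: a parameterized reduction from \pname{Multicolored Independent Set} in which one terminal per color class forces exactly one vertex from each (clique) class, acyclicity of the induced tree forbids edges between the chosen vertices, and the solution size and clique cover number are both $\Theta(k)$, yielding both the \WH{1}\ result and the ETH bound. The only difference is cosmetic: the paper uses a single apex terminal $v_0$ adjacent to all of $V(H)$ (so a chosen edge would close a triangle), whereas you use a hub path with pendant terminals (so a chosen edge would close a longer cycle); both arguments are sound.
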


\begin{proof}
    Our reduction is from \pname{Multicolored Independent Set} parameterized by the number of color classes $\ell$.
    Formally, let $H$ be the input to our source problem such that $V(H)$ is partitioned into $\ell$ color classes $\{V_1, \dots, V_\ell\}$ and each $V_i$ induces a clique on $H$.
    Our instance of \pname{$k$-in-a-tree} $(G, K)$ is such that $K = \{v_0, \dots, v_\ell\}$, $V(G) = K \cup V(H)$, every edge of $H$ is also in $G$, each $v_i \in K$ has $N_G(v_i) = V_i$, and $N(v_0) = V(H)$, so $k = \ell + 1$.
    
    If $I$ is a solution to \pname{Multicolored Independent Set}, $I \cup K$ is a solution to $(G, K)$: there are no cycles since $I$ is also an independent set of $G$, $v_0$ connects all vertices of $I$, and each other terminal is connected to exactly one vertex of $I$.
    For the converse, if $T$ is a solution to $(G, K)$, we claim that $I = T \setminus K$ is an independent set of size $\ell$. To see that this is the case, note that: (i) $T \cap V(H)$ must be independent, otherwise they would form a triangle with $v_0$; and (ii) $|T \cap V_i| = 1$ because each $V_i$ is a clique and the only way to connect $v_i$ to $v_0$ is by picking at least one vertex of $V_i$.
    Since $|K| = \ell+1$ we have that the solution size is at most $2k+1$ and, since $Q_i = V_i \cup \{v_i\}$ is a clique, we can cover $G$ with the $k+1$ cliques $\{\{v_0\}\} \bigcup_{i \in [k]} \{Q_i\}$.
    The second statement follows directly from the fact that \pname{Multicolored Independent Set} has no $n^{o(k)}$ time algorithm under ETH~\cite{cygan_parameterized} and that our reduction exhibits a linear relation between the parameters of the source and destination problems.
\end{proof}

Since the natural parameters offer little to no hope of fixed-parameter tractability, to obtain parameterized algorithms we turn our attention to the broad class of structural parameters.
Our first positive result is a direct application of textbook \MSO{1} formulae.

\begin{theorem}
    \pname{$k$-in-a-tree} parameterized by cliquewidth is in \FPT.
\end{theorem}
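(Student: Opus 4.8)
The plan is to express \pname{$k$-in-a-tree} as a fixed sentence of monadic second-order logic \emph{without} edge-set quantification (\MSO{1}) and then invoke the Courcelle--Makowsky--Rotics meta-theorem~\cite{courcelle_book}: every \MSO{1}-expressible graph property can be decided in time $f(\varphi,w)\cdot n$ on $n$-vertex graphs of cliquewidth $w$, once a clique-expression of bounded width is available. Such an expression, of width $2^{O(w)}$, can be computed in \FPT\ time from $w$ alone via the rankwidth approximation of Oum and Seymour, so it suffices to produce the formula and argue that its size does not depend on the instance.

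First I would rephrase the problem: $(G,K)$ is a yes-instance if and only if some $X \subseteq V(G)$ with $K \subseteq X$ induces a connected, acyclic subgraph. To keep the formula's size independent of $k$, I treat $K$ as a unary predicate decorating the input structure --- vertex-colouring by a fixed finite palette raises the cliquewidth by at most a constant --- so that ``$v$ is a terminal'' becomes atomic. The sentence then reads, schematically,
\[
  \exists X\ \bigl(\ (\forall v)\,(K(v) \to v \in X)\ \wedge\ \conn(X)\ \wedge\ \mathrm{acyc}(X)\ \bigr).
\]

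Both subformulas are standard \MSO{1} gadgets that use only set quantifiers. Connectivity of $G[X]$ is the statement that $X$ admits no partition into two nonempty parts with no edge running between them. Acyclicity of $G[X]$ is expressed through the fact that a graph is a forest precisely when it contains no nonempty vertex subset in which every vertex has at least two neighbours inside that subset; negating the existential quantifier over such a subset is again a constant-size \MSO{1} formula. Hence the entire sentence $\varphi$ has size depending on neither $n$ nor $k$, and the meta-theorem yields the claimed \FPT\ algorithm.

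The one genuine subtlety --- rather than a real obstacle --- is that we must stay inside \MSO{1} and cannot use \MSO{2}, so acyclicity may not be phrased as ``there is no edge set forming a cycle''; the minimum-degree-at-least-$2$ characterization above sidesteps this, and is exactly the point that makes the cliquewidth (as opposed to treewidth) parameterization go through. A minor bookkeeping point is that Courcelle's algorithm expects a clique-expression as part of its input, which we do not have a priori; this is supplied by the Oum--Seymour approximation, at the cost of an exponential blow-up confined to the $f(w)$ factor.
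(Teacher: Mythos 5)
Your proposal is correct and follows essentially the same route as the paper: express the existence of a connected, acyclic induced supergraph of $K$ in \MSO{1} and invoke Courcelle's Theorem for bounded cliquewidth. The only cosmetic differences are the particular standard gadgets chosen (the paper uses a transitive-closure formula for connectivity and an explicit cycle formula with $K$ as a free set variable, while you use the no-crossing-edge bipartition and minimum-degree-two characterizations with $K$ as a unary predicate), which does not change the substance of the argument.
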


\begin{proof}
    Let $(G,K)$ be the input to \pname{$k$-in-a-tree}, $R$ be the binary relation represented by the formula $\varphi_R(u,v,Y) = u \in Y \wedge v \in Y \wedge (e(u,v) \vee e(v,u))$, and $TC[R;x,y]$ be the reflexive and transitive closure of $R$.
    As such, $\conn(Y) = \forall x,y(x \in Y \wedge y \in Y \Rightarrow TC[R;x,y])$ is true if and only if $G[Y]$ is connected~\cite{courcelle_book}.
    Similarly, formula $\cycle(X) = \exists x,y,z \in X (x \neq y \wedge y \neq z \wedge x \neq z \wedge e(x,z) \wedge e(y,z) \wedge \exists Y \subset X( z \notin Y \wedge x \in Y \wedge y \in Y \wedge \conn(Y'))$ is true if and only if $G[X]$ has a cycle~\cite{courcelle_book}.
    Putting the previous two formulae together, formula $\indtree(K) = \exists S(K \subseteq S \wedge \conn(S) \wedge \neg \cycle(S))$ is true if and only if there is some superset of $K$ that is connected and acyclic.
    By Courcelle's Theorem~\cite{courcelle_book}, if $G$ has cliquewidth at most $q$, then one can determine in $f(q)n^\bigO{1}$ time if $G$ satisfies $\indtree(K)$ for some set $K$ of terminals.
\end{proof}

Towards showing that the minimum number of vertices we must delete to obtain a solution sits between feedback vertex set and vertex cover in Figure~\ref{fig:diagram}, let $S \subset V(G)$ be such that $G \setminus S$ is a solution.
First, note that $S$ is a feedback vertex set of $G$; for the other inequality, take a vertex cover $C$ of $G$ and note that placing two vertices of $G \setminus C$ with the same neighborhood in $C$ either generates a cycle in the solution or only one of them suffices -- even if we have many terminals, we need to keep only two of them -- so $|S| \leq |C| + 2^{|C|+1}$.
In terms of \pNPHness\ results, we can easily show that \pname{$k$-in-a-tree} is \pNPH\ when parameterized by bisection width\footnote{The width of a bipartition $(A, B)$ of $V(G)$ is the number of edges between the parts. The bisection width is the minimum width of all bipartitions of $V(G)$ such that $|A| \leq |B| \leq |A|+1$.}: to reduce from the problem to itself, we pick any terminal of the input and append to it a path with as many vertices as the original graph to obtain a graph with bisection width one.
Similarly, when parameterizing by the size of a minimum dominating set and again reducing from \pname{$k$-in-a-tree} to itself, we add a new terminal adjacent to any vertex of $K$ and a universal vertex, which can never be part of the solution since it forms a triangle with the new terminal and its neighbor.

\section{Single exponential time algorithms}
\label{sec:single_exp}

All results in this section rely on the \textit{rank based approach} of \citeauthor{lattice_algebra}~\cite{lattice_algebra}, which requires the additional definitions we give below.
Let $U$ be a finite set, $\Pi(U)$ denote the set of all partitions of $U$, and $\sqsubseteq$ be the coarsening relation defined on $\Pi(U)$, i.e given two partitions $p,q \in \Pi(U)$, $p \sqsubseteq q$ if and only if each block of $q$ is contained in some block of $p$.
It is known that $\Pi(U)$ together with $\sqsubseteq$ form a lattice, upon which we can define the usual \textit{join} operator $\join$ and \textit{meet} operator $\meet$~\cite{lattice_algebra}.
The join operation $p \join q$ outputs the unique partition $z$ where two elements are in the same block of $z$ if and only if they are in the same block of $p$ or $q$.
The result of the meet operation $p \meet q$ is the unique partition such that each block is formed by the non-empty intersection between a block of $p$ and a block of $q$.
Given a subset $X \subseteq U$ and $p \in \Pi(U)$, $p_{\downarrow X} \in \Pi(X)$ is the partition obtained by removing all elements of $U \setminus X$ from $p$, while, for $Y \supseteq U$, $p_{\uparrow Y} \in \Pi(Y)$ is the partition obtained by adding to $p$ a singleton block for each element in $Y \setminus U$.
For $X \subseteq U$, we shorthand by $U[X]$ the partition where one block is precisely $\{X\}$ and all other are the singletons of $U \setminus X$; if $X = \{a,b\}$, we use $U[ab]$.

A set of \textit{weighted partitions} of a ground set $U$ is defined as $\mathcal{A} \subseteq \Pi(U) \times \mathbb{N}$.
To speed up dynamic programming algorithms for connectivity problems, the idea is to only store a subset $\mathcal{A}' \subseteq \mathcal{A}$ that preserves the existence of at least one optimal solution.
Formally, for each possible extension $q \in \Pi(U)$  of the current partitions of $\mathcal{A}$ to a valid solution, the optimum of $\mathcal{A}$ relative to $q$ is denoted by $\opt(q, \mathcal{A}) = \min \{w \mid (p, w) \in \mathcal{A}, p \join q = \{U\}\}$.
$\mathcal{A}'$ \textit{represents} $\mathcal{A}$ if $\opt(q, \mathcal{A}') = \opt(q, \mathcal{A})$ for all $q \in \Pi(U)$.
The key result of \citeauthor{lattice_algebra}~\cite{lattice_algebra} is given by Theorem~\ref{thm:reduce}.

\begin{theorem}[3.7 of~\cite{lattice_algebra}]
    \label{thm:reduce}
    There exists an algorithm that, given $\mathcal{A}$ and $U$, computes $\mathcal{A}'$ in time $|\mathcal{A}|2^{(\omega-1)|U|}|U|^\bigO{1}$ and $|\mathcal{A}'| \leq 2^{|U|-1}$, where $\omega$ is the matrix multiplication constant.
\end{theorem}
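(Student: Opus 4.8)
The plan is to pass to the field $\mathbb{F}_2$, encode the predicate ``$p \join q = \{U\}$'' as an inner product of short $0/1$-vectors, and then obtain $\mathcal{A}'$ by greedily keeping, in order of nondecreasing weight, those pairs of $\mathcal{A}$ whose associated vector is linearly independent of the ones kept so far.

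Fix a distinguished element $u^\star \in U$ and set $\mathcal{S} = \{X \subseteq U : u^\star \in X\}$, so $|\mathcal{S}| = 2^{|U|-1}$. Call $X \subseteq U$ \emph{consistent} with $p \in \Pi(U)$ if $X$ is a union of blocks of $p$, and let $C$ be the $\Pi(U) \times \mathcal{S}$ matrix over $\mathbb{F}_2$ with $C[p,X] = 1$ exactly when $X$ is consistent with $p$. The conceptually central step is the $\mathbb{F}_2$-identity
\[
    \sum_{X \in \mathcal{S}} C[p,X]\,C[q,X] \;=\; [\, p \join q = \{U\} \,],
\]
$[\cdot]$ being the Iverson bracket. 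It holds because a set is consistent with both $p$ and $q$ precisely when it is a union of blocks of $p \join q$, so the left-hand count equals $2^{b-1}$ where $b$ is the number of blocks of $p \join q$, and $2^{b-1}$ is odd iff $b = 1$. Equivalently, writing $M[p,q] = [\, p \join q = \{U\} \,]$, this says $M = C C^{\top}$ over $\mathbb{F}_2$, whence $\operatorname{rank}_{\mathbb{F}_2}(M) \le |\mathcal{S}| = 2^{|U|-1}$.

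Next I would build $\mathcal{A}'$: list $\mathcal{A} = \{(p_1,w_1),\dots,(p_t,w_t)\}$ with $w_1 \le \dots \le w_t$, start with $\mathcal{A}' = \emptyset$, and for $i = 1, \dots, t$ add $(p_i,w_i)$ to $\mathcal{A}'$ iff the row $C[p_i,\cdot] \in \mathbb{F}_2^{\mathcal{S}}$ is \emph{not} in the span of the rows already selected. Since only linearly independent rows are added, $|\mathcal{A}'| \le \operatorname{rank}_{\mathbb{F}_2}(C) \le 2^{|U|-1}$. For correctness, fix $q$; as $\mathcal{A}' \subseteq \mathcal{A}$ we have $\opt(q,\mathcal{A}') \ge \opt(q,\mathcal{A})$, so it suffices to show $\opt(q,\mathcal{A}') \le \opt(q,\mathcal{A})$. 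Let $w = \opt(q,\mathcal{A})$, witnessed by $(p,w) \in \mathcal{A}$ with $p \join q = \{U\}$ (if no such witness exists there is nothing to prove, and if $(p,w) \in \mathcal{A}'$ we are done). Otherwise, when $(p,w)$ was examined its row was a sum $C[p,\cdot] = \sum_{j \in J} C[p_j,\cdot]$ over rows of pairs $(p_j,w_j) \in \mathcal{A}'$ with $w_j \le w$; taking the $\mathbb{F}_2$-inner product with $C[q,\cdot]$ and applying the identity yields $1 = [\, p \join q = \{U\} \,] = \sum_{j \in J} [\, p_j \join q = \{U\} \,]$, so some $j_0 \in J$ has $p_{j_0} \join q = \{U\}$ with $w_{j_0} \le w$, giving $\opt(q,\mathcal{A}') \le w_{j_0} \le w$.

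To meet the stated running time, I would not run this selection naively (that costs about $|\mathcal{A}|\cdot 2^{2|U|}$) but instead form the $|\mathcal{A}| \times 2^{|U|-1}$ matrix $C_{\mathcal{A}}$ whose rows are $C[p_i,\cdot]$ in nondecreasing weight order — each entry computable in $|U|^{\bigO{1}}$ time — and then extract, via fast (rectangular) Gaussian elimination processing rows in that fixed order, the set of indices $i$ with $C[p_i,\cdot] \notin \operatorname{span}(C[p_1,\cdot],\dots,C[p_{i-1},\cdot])$; for an $n \times m$ matrix with $n \ge m$ this costs $\bigO{n\,m^{\omega-1}}$ field operations, i.e. $|\mathcal{A}|\,2^{(\omega-1)|U|}|U|^{\bigO{1}}$ once the $|U|^{\bigO{1}}$ factors for building $C_{\mathcal{A}}$ and for the arithmetic are absorbed. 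The point I expect to be most delicate is precisely this last one: the elimination must pivot in the prescribed weight order (not for numerical convenience) so that the returned independent set is minimal with respect to weight — the ``weight-monotone basis'' property, which is exactly what the correctness argument above relies on. By contrast, the combinatorial identity and the representation argument are short once the factorization $M = C C^{\top}$ is in hand.
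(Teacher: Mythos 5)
This theorem is not proved in the paper at all: it is imported verbatim as Theorem~3.7 of the cited rank-based-approach reference, so there is no in-paper proof to compare against. Your reconstruction is essentially the original argument from that source -- the cut/consistency matrix over $\mathbb{F}_2$ with the factorization $M = CC^{\top}$, the rank bound $2^{|U|-1}$, and extraction of a weight-monotone row basis via fast rectangular Gaussian elimination in time $|\mathcal{A}|2^{(\omega-1)|U|}|U|^{\bigO{1}}$ -- and it is correct as stated.
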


A function $f : 2^{\Pi(U) \times \mathbb{N}} \times Z \mapsto 2^{\Pi(U) \times \mathbb{N}}$ is said to \textit{preserve representation} if $f(\mathcal{A}', z) = f(\mathcal{A}, z)$ for every $\mathcal{A}, \mathcal{A}' \in \Pi(U) \times \mathbb{N}$ and $z \in Z$; thus, if one can describe a dynamic programming algorithm that uses only transition functions that preserve representation, it is possible to obtain $\mathcal{A}'$.
In the following lemma, let $\rmc{\mathcal{A}} = \{(p,w) \in \mathcal{A} \mid \nexists (p, w') \in \mathcal{A}, w' < w\}$.

\begin{lemma}[Proposition 3.3 and Lemma 3.6 of~\cite{lattice_algebra}]
    \label{lem:functions}
    Let $U$ be a finite set and $\mathcal{A} \subseteq \Pi(U) \times \mathbb{N}$.
    The following functions preserve representation and can be computed in $|\mathcal{A}|\cdot|\mathcal{B}|\cdot|U|^\bigO{1}$ time.
    
    \begin{itemize}
        \item[\textbf{Union.}] For $\mathcal{B} \in \Pi(U) \times \mathbb{N}$, $\mathcal{A} \union \mathcal{B} = \rmc{\mathcal{A} \cup \mathcal{B}}$.
        \item[\textbf{Insert.}] For $X \cap U = \emptyset$, $\ins(X, \mathcal{A}) = \{(p_{\uparrow X \cup U}, w) \mid (p,w) \in \mathcal{A}\}$.
        \item[\textbf{Shift.}] For any integer $w'$, $\shift(w', \mathcal{A}) = \{(p,w + w') \mid (p,w) \in \mathcal{A}\}$.
        \item[\textbf{Glue.}] Let $\hat{U} = U \cup X$, then $\glue(X, \mathcal{A}) = \rmc{\left\{(\hat{U}[X] \join p_{\uparrow \hat{U}}, w) \mid (p,w) \in \mathcal{A}\right\}}$.
        %We also define a version of $\glue$ with a third integer parameter $w$, where $\glue_w(X, \mathcal{A}) = \shift(w, \glue(X, \mathcal{A}))$.
        \item[\textbf{Project.}] $\proj(X, \mathcal{A}) = \rmc{\{(p_{\downarrow \overline{X}}, w) \mid (p,w) \in \mathcal{A}, \forall u \in \overline{X} : \exists v \in \overline{X} : p \sqsubseteq U[uv]\}}$, if $X \subseteq U$.
        \item[\textbf{Join.}] If $\hat{U} = U \cup U'$, $\mathcal{A} \subseteq \Pi(U) \times \mathbb{N}$ and $\mathcal{B} \in \Pi(U') \times \mathbb{N}$, then $\joinf(\mathcal{A}, \mathcal{B}) = \rmc{\{(p_{\uparrow\hat{U}} \join q_{\uparrow \hat{U}}, w + w') \mid (p,w) \in \mathcal{A}, (q, w') \in \mathcal{B}\}}$.
    \end{itemize}
\end{lemma}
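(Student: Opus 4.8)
The plan is to argue directly from the definitions. Recall that $\mathcal{A}'$ represents $\mathcal{A}$ means $\opt(q,\mathcal{A}') = \opt(q,\mathcal{A})$ for every $q \in \Pi(U)$, and that ``$f$ preserves representation'' asks that, whenever $\mathcal{A}'$ represents $\mathcal{A}$ (and $\mathcal{B}'$ represents $\mathcal{B}$, for the binary operations), $f(\mathcal{A}',z)$ represents $f(\mathcal{A},z)$. Since $\rmcx(\cdot)$ only deletes a pair $(p,w)$ when a cheaper pair $(p,w')$ with $w'<w$ is already present, it never changes any value $\opt(q,\cdot)$, so I would simply ignore all $\rmcx$ wrappers. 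The backbone of the proof is one elementary identity in the partition lattice, obtained by inspecting the connected components of the union of two equivalence relations: for $U \subseteq \hat{U}$, $p \in \Pi(U)$, and $s \in \Pi(\hat{U})$,
\[
p_{\uparrow \hat{U}} \join s = \{\hat{U}\} \quad\Longleftrightarrow\quad \text{every block of } s \text{ meets } U \ \text{ and }\ p \join s_{\downarrow U} = \{U\}.
\]
Given this, the strategy for each operation $f$ is to rewrite $\opt(q, f(\mathcal{A},z))$ as an expression that refers to $\mathcal{A}$ (and $\mathcal{B}$) only through the map $q' \mapsto \opt(q',\mathcal{A})$; representation preservation is then immediate, because $\mathcal{A}$ and any representative induce the same map.

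For \textbf{Shift} and \textbf{Union} this needs no identity at all: $\opt(q,\shift(w',\mathcal{A})) = w' + \opt(q,\mathcal{A})$ and $\opt(q,\mathcal{A}\union\mathcal{B}) = \min\{\opt(q,\mathcal{A}),\opt(q,\mathcal{B})\}$. For \textbf{Insert} over $\hat{U} = U \cup X$, the identity with $s=q$ gives $\opt(q,\ins(X,\mathcal{A})) = \opt(q_{\downarrow U},\mathcal{A})$ when every block of $q$ meets $U$, and $+\infty$ otherwise. For \textbf{Glue}, commutativity and associativity of $\join$ rewrite $(\hat{U}[X]\join p_{\uparrow\hat{U}})\join q$ as $p_{\uparrow\hat{U}}\join(\hat{U}[X]\join q)$, so the identity with $s = \hat{U}[X]\join q$ yields $\opt(q,\glue(X,\mathcal{A})) = \opt\big((\hat{U}[X]\join q)_{\downarrow U},\mathcal{A}\big)$, again gated by the requirement that every block of $\hat{U}[X]\join q$ meet $U$. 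For \textbf{Project}, whose side condition is exactly that no block of $p$ is contained in $X$, the same identity with the two ground sets swapped shows that, under that side condition, $p_{\downarrow\overline{X}}\join q = \{\overline{X}\}$ is equivalent to $p \join q_{\uparrow U} = \{U\}$, whence $\opt(q,\proj(X,\mathcal{A})) = \opt(q_{\uparrow U},\mathcal{A})$. In every case the right-hand side sees $\mathcal{A}$ and $\mathcal{B}$ only through their $\opt$-profiles, so substituting representatives changes nothing.

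The \textbf{Join} operation is where the recipe needs an extra step, and I expect it to be the main obstacle. Fixing $r \in \Pi(\hat{U})$ with $\hat{U} = U \cup U'$ and applying the identity after freezing the $\mathcal{B}$-side partition, one gets
\[
\opt\big(r,\joinf(\mathcal{A},\mathcal{B})\big) = \min_{(q,w')\in\mathcal{B}} \Big( w' + \opt\big((q_{\uparrow\hat{U}}\join r)_{\downarrow U},\, \mathcal{A}\big) \Big),
\]
with the convention that a summand is $+\infty$ when some block of $q_{\uparrow\hat{U}}\join r$ avoids $U$. The right-hand side involves $\mathcal{A}$ only through $\opt(\cdot,\mathcal{A})$, so replacing $\mathcal{A}$ by a representative $\mathcal{A}'$ is immediate; but it still enumerates the actual pairs of $\mathcal{B}$, so to replace $\mathcal{B}$ I would invoke that $\joinf$ is commutative up to relabelling the ground set, apply the mirror image of the displayed formula, and compose the two substitutions ($\joinf(\mathcal{A}',\mathcal{B}')$ represents $\joinf(\mathcal{A}',\mathcal{B})$, which represents $\joinf(\mathcal{A},\mathcal{B})$). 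The delicate point is checking the displayed formula itself — confirming that ``every block of $q_{\uparrow\hat{U}}\join r$ meets $U$'' is the correct feasibility gate and correctly handling blocks lying entirely outside $U$ — which is precisely where the partition-lattice identity earns its keep.

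For the running time, I would observe that each output pair is produced from a single input pair (the unary operations) or from one pair of input pairs (\textbf{Union}, \textbf{Join}) by a constant number of partition manipulations — restriction, lifting, $\join$, $\meet$ — each computable in time polynomial in the ground-set size on a union-find or array encoding of partitions, and that the concluding $\rmcx$ step sorts the at most $|\mathcal{A}|\cdot|\mathcal{B}|$ candidates by a canonical encoding of their partitions and keeps the cheapest pair per partition. Altogether this stays within $|\mathcal{A}|\cdot|\mathcal{B}|\cdot|U|^{\bigO{1}}$, reading $|\mathcal{B}|$ as $1$ for the operations that take no second argument.
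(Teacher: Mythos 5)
Your argument is correct, but note that the paper does not prove this lemma at all: it is imported verbatim as Proposition~3.3 and Lemma~3.6 of the cited rank-based-approach paper of Bodlaender, Cygan, Kratsch and Nederlof, so the only comparison available is with that source. What you have written is essentially a reconstruction of those proofs: the cited work also proceeds by rewriting $\opt(q, f(\mathcal{A},z))$ so that $\mathcal{A}$ (and $\mathcal{B}$) enter only through their $\opt$-profiles, and your central lattice identity (for $U \subseteq \hat{U}$, $p \in \Pi(U)$, $s \in \Pi(\hat{U})$: $p_{\uparrow \hat{U}} \join s = \{\hat{U}\}$ iff every block of $s$ meets $U$ and $p \join s_{\downarrow U} = \{U\}$) is exactly the mechanism that drives the insert, glue, project and join cases there; your treatment of $\joinf$ by freezing the $\mathcal{B}$-side, checking the feasibility gate, and then composing two representation steps via symmetry and transitivity is sound (and in fact slightly stronger than needed, since under the paper's definition the second argument of $f$ is held fixed). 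Three small remarks: your identity needs $U \neq \emptyset$ (the degenerate case is trivial and harmless in all uses); the observation that $\rmcx$ never alters any $\opt$ value, which you use to discard the wrappers, is exactly the content of the cited Proposition~3.3; and in \textbf{Project} you quietly use the intended side condition ``no block of $p$ lies inside $X$'' rather than the paper's literal formula ``$\forall u \in \overline{X}\,\exists v \in \overline{X}: p \sqsubseteq U[uv]$'', which as printed is vacuous (take $v=u$) and is evidently a typo for $u \in X$ -- your reading is the one consistent with both the original reference and the forget-node step of the treewidth algorithm, but it is worth flagging rather than silently correcting.
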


Even though our problem is unweighted, we found it convenient to solve a weighted version and of \pname{$k$-in-a-tree}.
We state this slightly more general problem below.

\nproblem{Light Connecting Induced Subgraph}{A graph $G$, a set of $k$ terminals $K \subseteq V(G)$, and two integers $\ell, f$.}{Is there a connected induced subgraph of $G$ on $\ell + k$ vertices and at most $f$ edges that contains $K$?}

Note that an instance $(G, K)$ of \pname{$k$-in-a-tree} is positive if and only if there is some integer $\ell$ where the \pname{Light Connecting Induced Subgraph} instance $(G, K, \ell, \ell + k - 1)$ is positive.
Our goal is to use the number of edges in the solution to \pname{Light Connecting Induced Subgraph} as the cost of a partial solution in a dynamic programming algorithm.
This shall be particularly useful for join nodes during our treewidth algorithm, as we may resort to the optimality of the solution to guarantee that the resulting induced subgraph of a $\join$ operation is acyclic.

\subsection{Treewidth}
\label{sec:tw}

A \textit{tree decomposition} of a graph $G$ is a pair $\td{T} = \left(T, \mathcal{B} = \{B_j \mid j \in V(T)\}\right)$, where $T$ is a tree and $\mathcal{B} \subseteq 2^{V(G)}$ is a family where: $\bigcup_{B_j \in \mathcal{B}} B_j = V(G)$;
for every edge $uv \in E(G)$ there is some~$B_j$ such that $\{u,v\} \subseteq B_j$;
for every $i,j,q \in V(T)$, if $q$ is in the path between $i$ and $j$ in $T$, then $B_i \cap B_j \subseteq B_q$.
Each $B_j \in \mathcal{B}$ is called a \emph{bag} of the tree decomposition.
$G$ has treewidth has most $t$ if it admits a tree decomposition such that no bag has more than $t$ vertices.
For further properties of treewidth, we refer to~\citep{treewidth}.
After rooting $T$, $G_x$ denotes the subgraph of $G$ induced by the vertices contained in any bag that belongs to the subtree of $T$ rooted at bag $x$.
An algorithmically useful property of tree decompositions is the existence of a \emph{nice tree decomposition} that does not increase the treewidth of $G$.

\begin{definition}[Nice tree decomposition]
    A tree decomposition $\td{T}$ of $G$ is said to be \emph{nice} if its tree is rooted at, say, the empty bag $r(T)$ and each of its bags is from one of the following four types:
    \begin{enumerate}
        \item \emph{Leaf node}: a leaf $x$ of $\td{T}$ with $B_x = \emptyset$.
        \item \emph{Introduce vertex node}: an inner bag $x$ of $\td{T}$ with one child $y$ such that $B_x \setminus B_y = \{u\}$.
        \item \emph{Forget node}: an inner bag $x$ of $\td{T}$ with one child $y$ such that $B_y \setminus B_x = \{u\}$.
        \item \emph{Join node}: an inner bag $x$ of $\td{T}$ with two children $y,z$ such that $B_x = B_y = B_z$.
    \end{enumerate}
\end{definition}

\begin{theorem}
    There is an algorithm for \pname{Light Connecting Induced Subgraph} that, given a nice tree decomposition of width $t$ of the $n$-vertex input graph $G$ rooted at the forget node for some terminal $r \in K$, runs in time $2^\bigO{t}n^\bigO{1}$.
\end{theorem}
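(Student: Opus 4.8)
The plan is a bottom--up dynamic programming along the given nice tree decomposition that applies the reduction of Theorem~\ref{thm:reduce} at every node so that the tables stay of size $2^{\bigO{t}}$. For a bag $x$, a subset $S \subseteq B_x$ of \emph{chosen} vertices and an integer $i$ with $0 \leq i \leq n$, the table entry $\mathcal{A}_x[S,i] \subseteq \Pi(S) \times \mathbb{N}$ is meant to store, for each realizable connectivity pattern, the minimum number of edges: a pair $(p,w)$ belongs to $\mathcal{A}_x[S,i]$ exactly when there is a set $Z \subseteq V(G_x)$ with $Z \cap B_x = S$, $|Z| = i$, $K \cap V(G_x) \subseteq Z$, such that every connected component of $G_x[Z]$ meets $S$, the trace of the components of $G_x[Z]$ on $S$ is the partition $p$, and $w = |E(G_x[Z])|$ (keeping only the minimum $w$ per triple $(S,i,p)$). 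Two points are central: terminals are always forced into $Z$ (so $K$ is eventually covered), and the ``every component meets the bag'' condition is what makes $p$ meaningful. Rooting the decomposition at a forget node for a terminal $r$ is what lets us certify a \emph{global} solution: at the child $y$ of the root we have $V(G_y) = V(G)$ and $r \in S$ for every relevant $S$, and the instance is positive if and only if $\mathcal{A}_y[S,\ell+k]$ contains a pair $(\{S\},w)$ with $w \leq f$, since the single-block pattern together with the no-orphan invariant forces $G[Z]$ to be connected.

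I would then spell out the transitions, each obtained by composing operations from Lemma~\ref{lem:functions} (which preserve representation) with the reduction of Theorem~\ref{thm:reduce}, together with a few elementary filtering steps. A leaf node has $\mathcal{A}_x[\emptyset,0] = \{(\emptyset,0)\}$ and all other entries empty. At an introduce-vertex node for $u$ with child $y$: the branch $u \notin S$ copies $\mathcal{A}_y[S,i]$ and is forbidden when $u \in K$; the branch $u \in S$ uses $\ins$ to add the singleton $\{u\}$ and then, for each already chosen neighbour $v \in S \cap N(u)$ (all neighbours of $u$ in $G_x$ lie in $B_y$), applies $\glue$ and $\shift$ to merge $u$ with $v$'s block and add one to the edge count, increasing $i$ by one. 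At a forget-vertex node for $u$ with child $y$, $\mathcal{A}_x[S,i]$ is the $\union$ of $\mathcal{A}_y[S,i]$ (the branch $u \notin S$) with the image of $\mathcal{A}_y[S\cup\{u\},i]$ under projecting $u$ out of the partitions, discarding any partition in which $u$ was a block by itself, since forgetting such a $u$ would create a component missing the bag. At a join node with children $y,z$ and $B_x=B_y=B_z$, for every $S$ and all $i_1,i_2$ we combine $\mathcal{A}_y[S,i_1]$ and $\mathcal{A}_z[S,i_2]$ into $\mathcal{A}_x[S,i_1+i_2-|S|]$ with $\joinf$ (joining the patterns and summing the edge counts) followed by a $\shift$ by $-|E(G[S])|$ to undo the double counting of edges inside $S$; here one uses that $G_y$ and $G_z$ share no edge outside $B_x$.

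For the running time: a nice tree decomposition of width $t$ has $\bigO{tn}$ nodes; at each node there are at most $2^{t+1}$ subsets $S$ and $n+1$ values of $i$; after the reduction each entry has at most $2^{|S|-1} \leq 2^{t}$ weighted partitions; and every transition followed by one application of Theorem~\ref{thm:reduce} costs $2^{\bigO{t}} n^{\bigO{1}}$ by Lemma~\ref{lem:functions} and Theorem~\ref{thm:reduce}. Multiplying these bounds gives $2^{\bigO{t}} n^{\bigO{1}}$ overall, and the test at the child of the root yields the decision for \pname{Light Connecting Induced Subgraph}.

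I expect the crux to be the correctness bookkeeping rather than the complexity bound. Two points need care. First, the invariant ``every component of $G_x[Z]$ meets $B_x$'' must be maintained precisely: it must not be violated by the introduce and join steps and must be re-established at forget steps by the discard above, so that the single-block test at the child of the root really certifies a connected $G[Z]$ containing all of $K$; this is exactly where rooting at a forget node for a terminal is used. Second, the weight $w$ must be kept honestly equal to $|E(G_x[Z])|$: each edge of $G_x[Z]$ has to be accounted for exactly once (at the introduce node of its later-introduced endpoint) and not recounted across a join, which is what makes the condition ``at most $f$ edges'' readable from $w$; this is where solving the edge-counted version rather than the bare problem pays off (cf.\ the remark preceding Section~\ref{sec:tw}). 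Verifying that the filtering steps, like the $\proj$-type operation used at forget nodes, preserve representation, so that they may legitimately be interleaved with Theorem~\ref{thm:reduce}, is routine but should be done explicitly.
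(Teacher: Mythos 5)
Your proposal is correct and follows essentially the same route as the paper: a weighted-partition DP over the nice tree decomposition with the invariant that every component of the partial solution meets the bag, the operations of Lemma~\ref{lem:functions} for the transitions, Theorem~\ref{thm:reduce} applied after each step to keep tables of size $2^{\bigO{t}}$, and the answer read off from the single-block entry at the child of the root forget node for $r$. The only divergence is cosmetic bookkeeping: you count bag-internal edges at introduce nodes and correct at join nodes by shifting by $-|E(G[S])|$, whereas the paper charges the edges between a vertex and $S$ only when that vertex is forgotten, which removes the need for any correction at join nodes.
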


\begin{proof}
    Let $(G, K, \ell, f)$ be an instance of \pname{Light Connecting Induced Subgraph}.
    For each bag $x$, we compute the table $g_x(S, \ell') \subseteq \Pi(S) \times \mathbb{N}$, where $S \subseteq B_x$ contains the vertices of $B_x$ that must be present in a solution and $\ell'$ is the number of vertices we allow in the induced subgraphs of $G_x$; each weighted partition $(p,w) \in g_x(S, \ell')$ corresponds to a choice of an induced subgraph of $G_x$ with $\ell'$ vertices, $w + |E(G[S])|$ edges, and connected components given by the blocks of $p$.
    If $|S| > \ell'$, we define $g_x(S, \ell') = \emptyset$.
    After every operation, we apply the algorithm of Theorem~\ref{thm:reduce}.
    
    \noindent \textbf{Leaf node.} Since $B_x = \emptyset$, the only possible connecting induced subgraph is precisely the empty graph, so we define:
    
    \begin{equation*}
        \centering
        \hfill g_x(\emptyset, \ell') =
        \begin{cases}
            \{(\emptyset, 0)\}, &\text{ if } \ell' = 0 \text{;}\\
            \emptyset, &\text{ otherwise.}\\
        \end{cases}
    \end{equation*}
    
    \noindent \textbf{Introduce node.} Let $y$ be the child bag of $x$ and $B_x \setminus B_y = \{v\}$. We compute $g_x(S, \ell')$ as follows, where $\mathcal{A}_y(S, \ell', v) = \ins(\{v\}, g_y(S \setminus \{v\}, \ell' - 1))$:

    \begin{equation*}
        \centering
        \hfill g_x(S, \ell') =
        \begin{cases}
            \glue\left(N[v] \cap S, \mathcal{A}_y(S, \ell', v)\right), &\text{ if $v \in S$;}\\
            g_y(S, \ell'), &\text{ if $v \notin S \cup K$.}\\
            \emptyset, &\text{ otherwise.}
        \end{cases}
    \end{equation*}
    
    On the third case above, if $v \in K$ but $v \notin S$, we are not including a terminal into the induced subgraph, thus we cannot accept any partition with support $S$ as valid.
    If $v \notin K \cup S$, then no changes are necessary, since the only vertex of $G_x$ not in $G_y$ is not considered for the solution.
    Finally, for the first case, since $v \in S$, we must extend each partition of $g_y(S \setminus \{v\}, \ell' - 1)$ to the ground set $S$ (which we achieve by the insert operation); however, since we are looking for an induced subgraph, we must use every edge between $v$ and the neighbors of $v$ in $S$, merging the connected components containing them.
    Unlike in some connectivity problems such as \pname{Steiner Tree}, we only count the edges \textit{within} $S$ while processing forget nodes; this shall simplify the join operation considerably, as we do not need to worry about repeatedly counting edges inside the current bag.
    
    \noindent \textbf{Forget node.} Let $y$ be the child bag of $x$ and $v$ be the forgotten vertex. The transition is directly computed by:
    
    \begin{equation*}
        g_x(S, \ell') = g_y(S, \ell') \union \shift(|N(v) \cap S|, \proj(\{v\}, g_y(S \cup \{v\}, \ell')))
    \end{equation*}
    
    If $v$ is not used in a partial solution, $g_y(S, \ell')$ already correctly contains all the partial solutions where $v$ is not used; on the other hand, if $v$ was used in some solution, we must eliminate $v$ from the partitions where it appears; however, we only keep the partitions that do not lose a block, otherwise we would have a connected component (represented by $v$) that shall never be connected to the remainder of the subgraph and, thus, cannot be extended to a valid solution.
    
    \noindent \textbf{Join node.} If $y,z$ are the children of bag $x$, we compose its table by the equation:
    
    \begin{equation*}
        g_x(S, \ell') = \bigunion_{\substack{{\ell_1 + \ell_2\ =\ \ell' + |S|}\\{|S|\ \leq\ \ell_1,\ell_2}}} \joinf(g_y(S, \ell_1), g_z(S, \ell_2))
    \end{equation*}
    
    Where the union operator runs over all integer values satisfying the system $\ell_1 + \ell_2 = \ell' + |S|,\ |S| \leq \ell_1, \ell_2$; since we do not know how many vertices were used on the partial solutions of each subtree, we must try every combination to obtain all the partial solutions for the subtree rooted at $x$.
    Since we force the vertices in $S$ to be present in the solutions to the subtree rooted at bag $x$, combining two partial solutions, represented by $(p, w) \in g_y(S, \ell_1)$ and $(q, w') \in g_y(S, \ell_2)$, corresponds to uniting the set of edges of the respective partial solutions $G(p), G(q)$, which results in a merger of connected components.
    Since the edges of $S$ have not been counted towards the weights $w,w'$, $G(p \join q)$ has exactly $w + w' + |E(G[S])|$ edges.
    This is precisely the definition of the $\joinf$ operation.
    
    In order to obtain the answer to the problem, we look at the child $x$ of the forget node for terminal $r$, that is, the child of the root of the tree, and check if $g_x(\{r\}, \ell + |K|) \neq \emptyset$.
    In the affirmative, note that there is only one entry $(\{\{r\}\}, w) \in g(\{r\}, \ell + |K|)$ and that the graph that connects all the terminals using $\ell + |K|$ vertices has exactly $w$ edges, since $E(G[\{r\}]) 0 \emptyset$.
    
    For an introduce bag $x$ with child $y$, the time taken to compute all entries of $g_x$ is of the order of $\ell \sum_{i=0}^{|B_x|}\binom{|B_x|}{i}2^{\omega i}t^\bigO{1} \leq n (1 + 2^{\omega i})^tt^\bigO{1}$; the term $2^{\omega i}$ comes from the time needed to execute the algorithm of Theorem~\ref{thm:reduce} upon a initial set of size $2^i$.
    For join nodes, the intermediate $\join$ operation may yield a set of size $4^i$, so we have that the tables can be computed in time $\ell^3 \sum_{i=0}^{|B_x|}\binom{|B_x|}{i}2^{(\omega+1) i}t^\bigO{1} \leq (1 + 2^{(\omega+1)})^tn^\bigO{1}$.
\end{proof}

\begin{corollary}
    There is an algorithm for \pname{$k$-in-a-tree} that, given a nice tree decomposition of width $t$ of the $n$-vertex input graph $G$ rooted at the forget node for some terminal $r \in K$, runs in time $2^\bigO{t}n^\bigO{1}$.
\end{corollary}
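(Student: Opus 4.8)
The plan is to derive the corollary directly from the preceding theorem via the reduction already sketched right before Section~\ref{sec:tw}: an instance $(G,K)$ of \pname{$k$-in-a-tree} is positive if and only if there is some integer $\ell$ for which $(G,K,\ell,\ell+k-1)$ is a positive instance of \pname{Light Connecting Induced Subgraph}. The only thing to add is that a candidate solution tree has between $k$ and $n$ vertices, so it suffices to let $\ell$ range over $\{0,1,\dots,n-k\}$, a set of at most $n$ values. For each such $\ell$ I would run the algorithm of the theorem on $(G,K,\ell,\ell+k-1)$ with the given nice tree decomposition, and answer \textsc{yes} precisely when one of these $\bigO{n}$ runs accepts. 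Correctness is immediate from the equivalence above --- being connected, having $\ell+k$ vertices, and having at most $\ell+k-1$ edges together force the induced subgraph to be a tree containing $K$ --- and the total running time is $n\cdot 2^{\bigO{t}}n^{\bigO{1}} = 2^{\bigO{t}}n^{\bigO{1}}$.

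If one prefers a single invocation of the dynamic program, I would instead observe that the algorithm of the theorem already computes, at the child $x$ of the root, the whole table $g_x(\{r\},\ell')$ for every $\ell'\in\{1,\dots,n\}$; whenever it is nonempty it contains exactly one entry $(\{\{r\}\},w)$, and $w$ is the minimum number of edges of a connected induced subgraph of $G$ on $\ell'$ vertices containing $K$. Hence $(G,K)$ is a positive instance of \pname{$k$-in-a-tree} if and only if $g_x(\{r\},\ell') = \{(\{\{r\}\},\ell'-1)\}$ for some $\ell'$, i.e.\ the minimum-edge connected induced subgraph on $\ell'$ vertices is acyclic; this is a single scan over the already-computed tables and does not change the asymptotic running time.

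I do not anticipate a real obstacle, as this is a direct corollary. The one point meriting a word of care is that \pname{$k$-in-a-tree} leaves the number of vertices and edges of the solution free, whereas \pname{Light Connecting Induced Subgraph} fixes them; this is exactly what the enumeration over $\ell$ (equivalently, over $\ell'$) handles, using the fact that a connected graph on $p$ vertices with $p-1$ edges is a tree. If desired, one may also remark that the hypothesis that the decomposition is rooted at the forget node of a terminal is without loss of generality, since any nice tree decomposition of width $t$ can be extended above its root by a constant-length chain of introduce and forget nodes to obtain such a decomposition of the same width.
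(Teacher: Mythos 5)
Your proposal is correct and matches the paper's intended derivation: the corollary follows from the theorem via the equivalence, already stated before Section~\ref{sec:tw}, that $(G,K)$ is a positive instance of \pname{$k$-in-a-tree} if and only if $(G,K,\ell,\ell+k-1)$ is positive for some $\ell$, and trying the $\bigO{n}$ values of $\ell$ (or, equivalently, scanning the root-child table $g_x(\{r\},\ell')$ for an entry of weight $\ell'-1$) keeps the running time at $2^{\bigO{t}}n^{\bigO{1}}$.
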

\subsection{Distance to cluster}

We now show that parameterizing by the distance to cluster $q$ also yields an \FPT\ algorithm.
Throughout this section, $G$ is the input graph, $U$ is the cluster modulator, and $\mathcal{C} = \{C_1, \dots, C_r\}$ are the maximal cliques of $G \setminus U$.
We also use the framework developed by \citeauthor{lattice_algebra}~\cite{lattice_algebra} to optimize our dynamic programming algorithm.

\begin{theorem}
    \label{thm:cluster_fpt}
    There is an algorithm for \pname{$k$-in-a-tree} that runs in time $2^\bigO{q}n^\bigO{1}$ on graphs with distance to cluster at most $q$ graphs.
\end{theorem}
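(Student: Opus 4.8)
The plan is to mirror the treewidth algorithm of the previous subsection, now using the cluster modulator $U$ as a global separator between the cliques $C_1,\dots,C_r$ of $G\setminus U$ (which are simply its connected components). A modulator of size $q$ can be computed in \FPT\ time~\cite{clusterFPT}, so we may assume $U$ is given. As before we solve \pname{Light Connecting Induced Subgraph}: $(G,K)$ is a \textsc{yes}-instance of \pname{$k$-in-a-tree} iff $(G,K,\ell,\ell+k-1)$ is positive for some $\ell$, and since a connected induced subgraph on $N$ vertices with at most $N-1$ edges is exactly an induced tree, counting edges lets us enforce acyclicity by optimization rather than by an explicit combinatorial test. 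We may assume $k\ge 3$, as $k\le 2$ is solvable in polynomial time (for $k=2$, two terminals lie in a common induced tree iff they lie in a common component, a shortest path between them being chordless).

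The structural fact driving the algorithm is that any induced tree $T$ satisfies $|T\cap C_i|\le 2$ for every clique $C_i$, since three vertices of a clique span a triangle; hence the portion of the solution inside $C_i$ is one of at most $1+|C_i|+\binom{|C_i|}{2}$ possibilities, and $\sum_i\binom{|C_i|}{2}\le\binom{n}{2}$. We iterate over the $2^q$ candidates for $W:=T\cap U$, discarding a candidate if $G[W]$ contains a cycle or if $K\cap U\not\subseteq W$. For a fixed $W$ we run a dynamic program over $C_1,\dots,C_r$ whose table $\mathcal{T}_i$ after processing $C_1,\dots,C_i$, indexed by the number $\ell'$ of vertices used so far, is a set of weighted partitions of the ground set $W$: a pair $(p,w)$ records, for each processed clique $C_j$, a choice of at most two vertices to include (required to contain $K\cap C_j$, with the clique aborting the instance if $|K\cap C_j|\ge 3$) such that the induced subgraph on $W$ together with all picked vertices has $\ell'$ vertices, uses $w$ edges, and has connected components whose traces on $W$ are the blocks of $p$. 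Crucially, we only allow picks in which every picked clique vertex is adjacent to $W$, directly or through its clique partner; since distinct cliques are non-adjacent, such a vertex acquires no further edges later, so $p$ is exactly the interface data we need and no connected component is ever invisible to the partition — in particular, with $W=\emptyset$ no clique vertex can be used, and for $k\ge 3$ such a guess is discarded outright.

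The transition for clique $C_i$ handles each admissible pick $P$ by gluing the neighborhoods $N(v)\cap W$, $v\in P$, onto the current partition, shifting the weight by $\sum_{v\in P}|N(v)\cap W|$ plus one when $|P|=2$, and then uniting the results over all picks (the per-clique combination being a $\joinf$); these are the representation-preserving operations of Lemma~\ref{lem:functions}, and after each step we apply Theorem~\ref{thm:reduce}. Initialization is $\mathcal{T}_0=\{(\{\{u\}:u\in W\},\,|E(G[W])|)\}$, and the instance is positive for this $W$ iff $\mathcal{T}_r(\ell+k)$ contains $(\{W\},w)$ with $w\le\ell+k-1$ for some $\ell$. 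Correctness follows from the usual representation-preserving invariant together with the observation that every edge of the final subgraph is counted exactly once — inside $W$ at initialization, between $W$ and a picked vertex and inside a clique when that clique is processed, and never between distinct cliques — so the subgraph is connected with $\ell+k$ vertices and at most $\ell+k-1$ edges iff it is an induced tree connecting $K$.

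For the running time, each of the $2^q$ guesses drives a DP with $O(n)$ clique steps processing $O(n^2)$ picks in total, $O(n)$ values of $\ell'$, and weighted-partition sets of size $2^{O(q)}$ on a ground set of size at most $q$, with a $2^{(\omega-1)q}q^{O(1)}$ reduction after each step; altogether this is $2^{O(q)}n^{O(1)}$. The reduction of Theorem~\ref{thm:reduce} is essential: without it the tables could carry all $2^{\Theta(q\log q)}$ partitions of $W$. The step I expect to be the main obstacle is the bookkeeping that certifies the selected subgraph is a tree: one must check that every edge is counted exactly once across the many clique-merges, that no connected component hides inside a clique and escapes the partition over $W$ (handled by forcing every picked clique vertex to attach to $W$), and that terminals inside cliques are forced into the pick while a clique with three or more terminals kills the instance. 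Once these points are pinned down, the remainder is a direct transcription of the rank-based dynamic programming used for the treewidth algorithm.
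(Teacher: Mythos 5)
Your overall route is the same as the paper's: guess $S = T \cap U$ at a cost of $2^q$, run a weighted-partition dynamic program over the cliques of $G \setminus U$ in which each clique contributes a pick of at most two vertices, use the glue/shift/union operators of Lemma~\ref{lem:functions} together with the reduce algorithm of Theorem~\ref{thm:reduce}, and certify acyclicity by comparing the edge count with the vertex count. The paper's version differs only in bookkeeping: it keeps all terminals in the graph from the start and restricts picks to non-terminal clique vertices, and it shrinks each clique by a true-twin reduction, which your polynomial enumeration of all $O(|C_i|^2)$ picks makes unnecessary.

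However, one concrete step as written would fail. Your initialization $\mathcal{T}_0 = \{(\{\{u\} : u \in W\},\, |E(G[W])|)\}$ records the edges of $G[W]$ only in the weight, never in the partition, and your transitions glue only along neighborhoods of picked clique vertices; hence connectivity that the solution achieves through edges \emph{inside} $W$ is invisible to the partition, and the final test $p = \{W\}$ rejects valid trees. For example, take $U = W = \{u,v\}$ with $uv \in E(G)$, both terminals, and a third terminal $x$ in a clique with $N(x) \cap W = \{u\}$: the tree on $\{u,v,x\}$ exists, but your DP ends with partition $\{\{u\},\{v\}\}$ and reports a negative answer. The fix is exactly what the paper does in its base case: start from the partition of $W$ into the connected components of $G[W]$ (the paper additionally glues components sharing a common terminal neighbor, because in its formulation the terminals are present from the start). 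A second, smaller slip: for a two-vertex pick $P$ you must glue the single set $N(P) \cap W$ (as the paper does with $N_S(X)$), not each $N(v) \cap W$ separately, since it is the internal edge of $P$ that merges the blocks reached by its two endpoints; gluing per vertex again loses connectivity. With these two corrections your argument goes through and coincides with the paper's proof.
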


\begin{proof}
    Suppose we are given the instance $(G, K)$ and the $q$-vertex cluster modulator $U \subseteq V(G)$.
    We begin by guessing a subset $K \cap U \subseteq S \subseteq U$ of vertices that will be present in a solution for the problem.
    Now, given $S$, we execute the following pre-processing step: for each clique $C_i \in \mathcal{C}$, we discard all but one vertex of each maximal set of true twins; this way, we limit the size of $C_i^* = C_i \setminus K$ to $2^q$.
    
    An entry of our dynamic programming table $f_S(i, c, \ell) \subseteq \Pi(S) \times \mathbb{N}$ is a set of partial solutions of $G_i = G[K \cup S \bigcup_{j = 1}^i C_j]$, each of which uses exactly $c$ vertices of $C_i^*$ and induces a subgraph of $G_i$ on $\ell$ vertices.
    Note that we cannot use more than two vertices of each clique, so we only consider $c \in \{0,1,2\}$.
    In each $(p,w) \in f_S(i, c, \ell)$, each block of $p$ corresponds to the vertices of $S$ that lie in the same connected component of $G_i$, and $w$ is the number of edges used in the respective induced subgraph.
    Our transition is given by the following equation, where $W(S, X) = |N(X) \cap (S \cup K)| + |E(G[X])|$; if either $c + |K \cap C_i| > 2$, $c > \ell$, or there is some vertex in $K \cap C_i$ that has no neighbor in $S$ and $c = 0$, we define $f_S(i, c, \ell) = \emptyset$.
    
    \begin{equation*}
        \centering
        \hfill f_S(i, c, \ell) = \bigunion_{j = 0}^2 
                            \bigunion_{X \in \binom{C_i^*}{c}} \glue_{W(S, X)}(N_S(X), f_S(i-1, j, \ell - c))
    \end{equation*}
    
    The above defines $f_S$ for all $(i, c, \ell) \in [r] \times \{0,1,2\} \times [n]$; we extend $f_S$ to include the base case $f_S(0, 0, |S \cup K|) = \{(p(S, K), E(G[S \cup K])\}$, where $p(S, K) \in \Pi(S)$ is the partition obtained by gluing together the connected components of $G[S]$ which have a common neighbor in $K$; for all other entries, $f_S(r + 1, c, \ell) = \emptyset$.
    Our goal now is to show that there is a solution to our problem using the vertices of $S$ if and only if $(\{S\}, w) \in f_S(r, c, \ell)$, for some pair $\ell \geq |S \cup K|$, $c \geq |K \cap C_1|$, such that $w = \ell - 1$.
    To do so, we first prove that $f(i, c, \ell)$ contains all partitions of $S$ that represent all possible induced subgraphs of $G_i$ on $\ell$ vertices that use $c$ vertices of $C_i^*$, and that use as few edges as possible.
    
    By induction, suppose that this holds for every entry $f_S(i-1, a, b)$.
    If $c + |K \cap C_i| > 2$ or $c > \ell$, either we want to use more than two vertices of $C_i$, which certainly implies that there is a copy of $K_3$ in the solution, or we want to use more than $\ell$ vertices of $C_i^*$, which is equally impossible, so $f_S(i, c, \ell) = \emptyset$.
    If $c = 0$, we have that a weighted partition $(p,w)$ is valid in $G_i$ if and only if it is valid in $G_{i-1}$, since we use no vertices in $V(G_i) \setminus V(G_{i-1}$; thus, $(p,w) \in f_S(i, 0, \ell)$ if and only if $(p,w) \in f_S(i - 1, j, \ell)$ for some $j \in \{0,1,2\}$.
    Otherwise, suppose we want to add a subset of vertices $X \subseteq C_i^*$ to a partial solution $H$, which is represented by $(p, w) \in f_S(i-1, a, \ell - |X|)$.
    In this case, since $N(X) \setminus C_i \subseteq U$ and $U \cap V(H) \subseteq S$, we have that $X$ can only reduce the number of connected components of $H$ if $N(X)$ intersects two distinct blocks of $p$.
    Thus, the connected components of $G[V(H) \cup X]$ are represented, precisely, by $\glue(N_S(X), p)$ and the only new edges used are those between $X$ and $S \cup K$, and the ones internal to $X$; this is precisely the shift accounted by $W(S, X)$.
    The minimality of $w$ for an entry $(p,w) \in f(i, c, \ell)$ is guaranteed by the $\rmcx$ operation imbued in both $\glue$ and $\union$.
    Note that if there is some entry $(p,w) \in f_S(r, c, \ell)$ for some $c$ such that $p = \{S\}$ and $w = \ell - 1$, this means that there is an induced subgraph of graph that has $S$ contained in a connected component, uses $\ell$ vertices and $\ell - 1$ edges, and thus, must be an induced tree of $G$.
    That it connects all vertices of $K$ follows from the fact that $f_S(0,0,|S \cup K|)$ contains $p(S, K)$ and, in every clique $C_i$ that contains a vertex of $K$ with no neighbor in $S$, we force that at least one vertex of $C_i$ must be picked in a solution.
    
    In terms of complexity, after every $\glue$ or $\union$ operation, we apply the algorithm of Theorem~\ref{thm:reduce}.
    For each tuple $(S, i, c, \ell, j)$, we do so up to $|C_i^*|^2 \leq 2^{2q}$ times per tuple, which implies in a time requirement time of the order of $2^{2q} \cdot 2^{q-1}2^{(\omega - 1)q}q^\bigO{1} \leq 2^{(\omega+3)q}q^\bigO{1}$.
    Since we have $2^qn^\bigO{1}$ tuples, our algorithm runs in $2^{(\omega+4)q}n^\bigO{1}$ time.
\end{proof}

\subsection{Distance to co-cluster}

We can use the result on distance to cluster to solve the problem on graphs with distance to co-cluster at most $q$ without much effort, as we see in the following proposition.

\begin{theorem}
    There is an algorithm for \pname{$k$-in-a-tree} that runs in time $2^\bigO{q}n^\bigO{1}$ where $q$ is the distance to co-cluster.
\end{theorem}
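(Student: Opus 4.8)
The plan is to reduce, in polynomial time, to a polynomial number of instances of \pname{$k$-in-a-tree} on graphs whose distance to cluster is at most $q+1$, and then apply Theorem~\ref{thm:cluster_fpt}. Suppose we are given the instance $(G,K)$ together with a $q$-vertex co-cluster modulator $U$; if $V(G) = U$ the instance is solved directly, so assume $V(G) \setminus U \neq \emptyset$. Then $G \setminus U$ is complete multipartite; let $I_1,\dots,I_r$ be its parts, computable in polynomial time as the connected components of $\overline{G \setminus U}$.

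First I would establish the structural fact that drives the reduction: in any solution, the portion outside $U$ lives inside a single part, up to one extra vertex. Let $T$ be an induced tree of $G$ containing $K$ and consider the induced forest $T \setminus U$ of the complete multipartite graph $G \setminus U$. If $T \setminus U$ met three distinct parts, three representatives would induce a triangle of $G$, hence of $T$, contradicting acyclicity; so $T \setminus U$ meets at most two parts. If it met two parts $I_a$ and $I_b$ in at least two vertices each, then two vertices of $T \cap I_a$ together with two of $T \cap I_b$ would induce a $C_4$; so at least one of the two parts, say $I_a$, meets $T$ in exactly one vertex $v$. Hence either $T \subseteq U \cup I_j$ for some $j$, or $T \subseteq U \cup \{v\} \cup I_j$ for some $j$ and some $v \in V(G) \setminus (U \cup I_j)$.

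The algorithm then enumerates all such hosts. For every part $I_j$ with $K \subseteq U \cup I_j$, run the algorithm of Theorem~\ref{thm:cluster_fpt} on $G[U \cup I_j]$ with modulator $U$: this is a valid input since $G[U \cup I_j] \setminus U = I_j$ is edgeless, hence a cluster graph, so its distance to cluster is at most $q$. For every $v \in V(G) \setminus U$ and every part $I_j$ with $v \notin I_j$ and $K \subseteq U \cup \{v\} \cup I_j$, run the algorithm on $G[U \cup \{v\} \cup I_j]$ with modulator $U \cup \{v\}$, of size at most $q+1$ (again deleting this modulator leaves the edgeless set $I_j$). Accept if and only if some run accepts.

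For correctness, a tree produced by any run is an induced subgraph of $G$, hence an induced tree of $G$, and it contains $K$ because we only ran on subgraphs whose vertex set includes $K$; conversely, if $(G,K)$ is positive, the structural fact places some solution $T$ inside one of the enumerated subgraphs $G'$, and since $K \subseteq T \subseteq V(G')$ the corresponding run accepts. There are at most $r + r\,n = n^{\bigO{1}}$ runs, each costing $2^{\bigO{q}} n^{\bigO{1}}$ time by Theorem~\ref{thm:cluster_fpt}, for a total of $2^{\bigO{q}} n^{\bigO{1}}$. I expect the only nonroutine step to be the structural fact above; everything else is bookkeeping, which is why the result follows with little extra effort from the distance to cluster algorithm.
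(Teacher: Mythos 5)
Your proposal is correct and follows essentially the same route as the paper: the identical structural observations (three parts give a triangle, two-plus-two gives a $C_4$) reduce the search to hosts of the form $G[U \cup I_j]$ or $G[U \cup \{v\} \cup I_j]$ with cluster modulator $U$ or $U \cup \{v\}$, to which Theorem~\ref{thm:cluster_fpt} is applied. Your uniform enumeration of hosts is a slightly cleaner bookkeeping of the same idea than the paper's explicit case split on how $K$ meets $V(G)\setminus U$, but it is not a different argument.
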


\begin{proof}
    Suppose we are given a co-cluster modulator $U$ of the input graph $G$ and let $\mathcal{I} = \{I_1, \dots, I_r\}$ be the family of independent sets of $G - U$.
    Since $G - U$ is a complete multipartite graph, if we pick vertices of three distinct elements of $\mathcal{I}$, we will form a $K_3$ in the induced subgraph.
    Moreover, for each pair $I_i, I_j \in \mathcal{I}$, at most one of them may have more than one vertex in any solution, the induced subgraph would contain a $C_4$.
    This implies that $K$ can intersect $V(G - U)$ in at most three vertices and at most two independent sets.
    If this intersection has size three, for each $K \cap U \subseteq S \subseteq U$, we can easily verify in polynomial time if $G[S \cup K]$ is a tree.
    Otherwise, for each pair $I_i, I_j \in \mathcal{I}$ such that $K \subseteq U \cup I_i \cup I_j$, we guess which one of them will have more than one vertex in the solution, say $I_i$, and which vertex $v \in I_i$ will be in the solution, with the restriction that $K \subseteq U \cup I_j \cup \{v\}$.
    Now, the graph $G' = G[U \cup I_j \cup \{v\}$] has a cluster modulator $U \cup \{v\}$, and we can apply the algorithm of Theorem~\ref{thm:cluster_fpt} on it to decide if there is an induced tree of $G'$ connecting $K$.
    It follows from the observations that there is a valid induced subtree of $G$ if and only if for some choice $I_i, I_j$ and $v \in I_i$ $G'$ has one such induced subgraph.
\end{proof}
\section{A linear kernel for Feedback Edge Set}
\label{sec:fes}

In this section, we prove that \pname{$k$-in-a-tree} admits a linear kernel when parameterized by the size $q$ of a minimum feedback edge set.
Throughout this section, we denote our input graph by $G$, the set of terminals by $K$, and the tree obtained by removing the edges of a minimum size feedback edge set $F$ by $T(F)$.
Note that, if $G$ is connected and $F$ is of minimum size, $G \setminus F$ is a tree; we may safely assume the first, otherwise we either have that $(G, K)$ is a negative instance if $K$ is spread across multiple connected components of $G$, or there must be some edge of $F$ that merges two connected components of $G \setminus F$ and does not create a cycle, contradicting the minimality of $F$. 
The kernelization algorithm we describe works in two steps: it first finds a feedback edge set $F$ that minimizes the number of edges incident to vertices of degree two in $T(F)$, then compresses long induced paths of $G$.
We denote the set of leaves of a tree $H$ by $\leaves{H}$.

\begin{rrule}
    \label{rrule:deg_one}
    If $G$ has a vertex $v$ of degree one, remove $v$ and, if $v \in K$, add the unique neighbor of $v$ in $G$ to $K$.
\end{rrule}

\begin{sproof}{\ref{rrule:deg_one}}
    Safeness follows from the fact that a degree one vertex is in the solution if and only if its unique neighbor also is.
\end{sproof}

\begin{observation}
    \label{obs:leaf_count}
    After exhaustively applying Rule~\ref{rrule:deg_one}, for every minimum feedback edge set $F$ of $G$, $T(F)$ has at most $2q$ leaves. Moreover, $T(F)$ has at most as many vertices of degree at least three as leaves.
\end{observation}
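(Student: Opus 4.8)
The plan is to exploit two facts: that exhaustive application of Rule~\ref{rrule:deg_one} removes all degree-one vertices, and that deleting a minimum feedback edge set from a connected graph leaves a spanning tree. After Rule~\ref{rrule:deg_one} is no longer applicable, every vertex of $G$ satisfies $\deg_G(v) \geq 2$, the only exception being the case $|V(G)| \leq 1$, for which the statement is trivial; I would dispatch that case first and assume $|V(G)| \geq 2$ henceforth. Also, as already argued before the rule, since $G$ is connected and $F$ is of minimum size, $T(F) = G \setminus F$ is connected and acyclic, i.e., a spanning tree of $G$, and in particular $|F| = |E(G)| - |V(G)| + 1 = q$.

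For the bound on leaves, I would take an arbitrary $v \in \leaves{T(F)}$, so that $\deg_{T(F)}(v) = 1$; since $\deg_G(v) \geq 2$ and $T(F)$ contains every edge of $G$ except the $q$ edges of $F$, at least one edge of $F$ must be incident to $v$. Hence every leaf of $T(F)$ is an endpoint of some edge of $F$. As each of the $q$ edges of $F$ contributes exactly two endpoints, we get $|\leaves{T(F)}| \leq 2q$.

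For the ``moreover'' part, I would use the degree-sum identity for a tree on $n = |V(G)|$ vertices: summing $\deg_{T(F)}(v) - 2$ over all $v$ yields $2(n-1) - 2n = -2$. In this sum each leaf contributes $-1$, each degree-two vertex contributes $0$, and each vertex of degree at least three contributes at least $1$; therefore the number of vertices of degree at least three is at most $|\leaves{T(F)}| - 2 \leq |\leaves{T(F)}|$, which is the claimed inequality (indeed slightly stronger).

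The main obstacle — such as it is — is purely bookkeeping rather than conceptual: one must correctly handle the degenerate cases in which $T(F)$ has fewer than three vertices, and one must be careful to justify that $T(F)$ is a \emph{spanning tree} (not merely a spanning forest), which follows from the connectivity of $G$ together with the minimality of $F$; both points were already settled in the paragraph preceding Rule~\ref{rrule:deg_one}, so the argument is short.
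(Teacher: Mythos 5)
Your proof is correct and is exactly the standard argument the paper leaves implicit (the statement is given as an Observation with no written proof): every leaf of the spanning tree $T(F)$ must absorb an endpoint of some edge of $F$ once all degree-one vertices of $G$ have been eliminated, giving the $2q$ bound, and the degree-sum identity in a tree gives the (slightly stronger) bound of $|\leaves{T(F)}|-2$ on the number of vertices of degree at least three. No gaps; the degenerate small cases are handled appropriately.
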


We begin with any minimum feedback edge set $F$ of $G$.
We partition $T(F) \setminus \leaves{T(F)}$ into $(D_2, D_*)$ according to the degree of the vertices of $G$ in $T(F)$: $v \in D_2$ if and only if $\deg_{T(F)}(v) = 2$.
For $u,v,f \in V(G)$, we say that $u$ \textit{$F$-links} $v$ to $f$ if $v=u$ or if $T(F) \setminus \{u\}$ has no $v-f$ path.
We say that vertices $u,f$ are an \textit{$F$-pair} if the set of internal vertices of the unique $u-f$ path $P_F(u,f)$ of $T(F)$ is entirely contained in $D_2$; we denote the set of internal vertices by $P^*_F(u,f)$.

\begin{rrule}
    \label{rrule:inner_edge}
    Let $u,f,w_1,w_2 \in V(G)$ be such that $u,f$ form an $F$-pair, $w_1 \neq w_2 \neq u \neq w_1$, $w_2$ is the unique neighbor of $f$ that $F$-links it to $u$ and $w_1$ $F$-links $w_2$ to $u$. If $fw_1 \in F$, remove edge $fw_1$ from $F$ and add edge $fw_2$ to $F$.
\end{rrule}

\begin{sproof}{\ref{rrule:inner_edge}}
    Let $F' = F \setminus \{fw_1\}$ and note that  $w_2$ $F$-links $f$ to $w_1$; as such, edge $fw_2$ is in the unique cycle of $G \setminus F'$, so $F'' = F' \cup \{fw_2\}$ is a feedback edge set of $G$ of size $q$.
    Furthermore, $w_2$ is the only vertex that has fewer neighbors in $T(F'')$ than in $T(F)$; since $w_2$ had two neighbors in $T(F)$, and $F''$ is a minimum feedback edge set of $G$, $w_2$ is a leaf of $T(F'')$, so it holds that $\leaves{T(F)} \subset \leaves{T(F'')}$.
\end{sproof}

Reduction Rule~\ref{rrule:inner_edge} guarantees that there are no edges in $F$ between vertices of the paths between $F$-pairs, otherwise we could increase the number of leaves of our tree.

\begin{rrule}
    \label{rrule:d2_to_notleaf}
    Let $f,u,v \in V(G)$ be such that $v \notin \leaves{T(F)} \cup P_F(u,f)$, $u,f$ form an $F$-pair, and $|P_F(u,f)| \geq 4$. If there are adjacent vertices $w_1,w_2 \in P^*_F(u,f)$ with $vw_1 \in F$ and $w_2$ $F$-linking $v$ and $w_1$, remove edge $vw_1$ from $F$ and add edge $w_1w_2$ to $F$.
\end{rrule}

\begin{sproof}{\ref{rrule:d2_to_notleaf}}
    Let $F' = F \setminus \{w_1w_2\}$.
    Since $G \setminus F'$ has one more edge than $T(F)$ and $w_2$ $F$-links $v$ and $w$ (see Figure~\ref{fig:d2_to_notleaf}), the unique cycle of $G \setminus F'$ contains edge $w_1w_2$, so $F'' = F' \cup \{vw_1\}$ is a feedback edge set of $G$ of size $q$.
    Since neither $v$ nor $w$ are leaves of $T(F)$ and $w_2 \in D_2$, $\deg_{T(F'')}(w_2) = 1$, so it holds that  $|\leaves{T(F)}| < |\leaves{T(F'')}|$.
\end{sproof}

\begin{figure}[!htb]
    \centering
        \begin{tikzpicture}[xscale=2, yscale=1.2,rotate=-90]
            %\draw[help lines] (-5,-2) grid (5,2);
            \GraphInit[unit=3,vstyle=Normal]
            \SetVertexNormal[Shape=circle, FillColor = black, MinSize=3pt]
            \tikzset{VertexStyle/.append style = {inner sep = \inners,outer sep = \outers}}
            \SetVertexLabelOut
            \draw (-0.15, -0.8) rectangle (0.5, 0.8);
            \Vertex[x=0, y=0.45, Math, Lpos=-90]{w_1}
            \Vertex[x=0, y=-0.05, Math, Lpos=-90]{w_2}
            \Vertex[x=0, y=-0.55, NoLabel]{a}
            
            \draw[dashed] (-1.5, -1.2) rectangle (-0.85, 1.2);
            \Vertex[x=-1, y=0.95, Math, Lpos=90]{f}
            \Vertex[x=-1, y=-0.05, Math, Lpos=90]{v}
            \Vertex[x=-1, y=-1.05, Math, Lpos=90]{u}
            
            \Edges(v,u,a,w_2)
            \Edges(w_1,f)
            \Edge[style={double}](w_1)(v)
            \Edge[style={dotted}](w_1)(w_2)
            
            \node at (0.175, -1.5) {$D_2$};
            \node at (-1.165, -1.5) {$D_*$};
        \end{tikzpicture}
        \caption{Example for Reduction Rule~\ref{rrule:d2_to_notleaf}, where the thick edge $vw_1$ is removed from $F$ and the dotted edge $w_1w_2$ added to $F$.\label{fig:d2_to_notleaf}}
\end{figure}
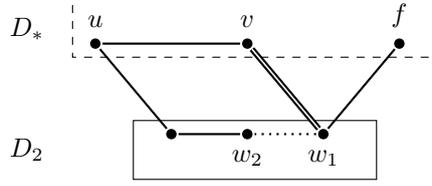

Note that, in Reduction Rule~\ref{rrule:d2_to_notleaf}, the only properties that we exploit are that $v,w_1,w_2 \notin \leaves{T(F)}$ and that there is at least one pair of vertices between $v$ and $f$ in $D_2$.
So even if $v=u$, $u \in D_2$, or $f \in \leaves{T(F)}$, we can apply Rule~\ref{rrule:d2_to_notleaf}.
Essentially, if Rule~\ref{rrule:d2_to_notleaf} is not applicable, for each edge $e \in F$ that contains a vertex $w_1$ of $D_2$ as an endpoint, either $e$ has a leaf of $T(F)$ as its other endpoint, or $w_1$ is adjacent to two vertices not in $D_2$.

\begin{rrule}
    \label{rrule:midway_leaf}
    Let $f,u,v,z,w_2 \in V(G)$ be such that $v \in \leaves{(T(F)} \setminus \{f\}$, $w_2 \in D_2$ is the unique neighbor of $v$ in $T(F)$, $u,f$ and $z,v$ are $F$-pairs, and $u$ $F$-links $f$ to $z$. If there is some $w_1 \in P^*_F(u,f)$ with $vw_1 \in F$, remove $vw_2$ from $F$ and add $vw_1$.
\end{rrule}

\begin{sproof}{\ref{rrule:midway_leaf}}
    Let $F' = F \setminus \{vw_1\}$.
    Since $T(F)$ is a tree and $w_2$ $F$-links $w_1$ and $v$, edge $vw_2$ is contained in the unique cycle of $G \setminus F'$; consequently, $F'' = F' \cup \{vw_2\}$ is a feedback edge set of $G$ of size $q$, but it holds that the degrees of $v$ and $w_2$ in $T(F'')$ are equal to one. 
    Since $w_2 \notin \leaves{T(F)}$, we have that $\leaves{T(F)} \subset \leaves{T(F'')}$.
\end{sproof}

Our analysis for Rule~\ref{rrule:midway_leaf} works even if $u = z$ or $z=w_2$: what is truly crucial is that $w_2 \in D_2$ and that $v \neq f$.
We present an example of the general case in Figure~\ref{fig:midway_leaf}.

\begin{figure}[!htb]
    \centering
        \begin{tikzpicture}[xscale=2, yscale=1.2, rotate=-90]
            %\draw[help lines] (-5,-2) grid (5,2);
            \GraphInit[unit=3,vstyle=Normal]
            \SetVertexNormal[Shape=circle, FillColor = black, MinSize=3pt]
            \tikzset{VertexStyle/.append style = {inner sep = \inners,outer sep = \outers}}
            \SetVertexLabelOut
            
            \draw (-0.15, -0.4) rectangle (0.5, 1);
            \Vertex[x=0, y=0.75, Math, Lpos=-90]{w_1}
            \Vertex[x=0, y=-0.15, Math, Lpos=-90, L={w_2}]{w_2}
            
            \draw[thick] (-1.5, 0.8) rectangle (-0.85, 1.9);
            \Vertex[x=-1, y=1.75, Math, Lpos=90]{f}
            \Vertex[x=-1, y=0.95, Math, Lpos=90]{v}
            
            \draw[dashed] (-1.5, -1.2) rectangle (-0.85, -0.1);
            \Vertex[x=-1, y=-0.25, Math, Lpos=90]{u}
            \Vertex[x=-1, y=-1.05, Math, Lpos=90]{z}
            
            \Edges(w_2,z,u,w_1,f)
            \Edge[style={double}](w_1)(v)
            \Edge[style={dotted}](w_2)(v)
            
            \node at (0.175, -0.7) {$D_2$};
            \node at (-1.165, -1.5) {$D_*$};
            \node at (-1.165, 2.5) {$\leaves{T(F)}$};
        \end{tikzpicture}
        \caption{Example for Reduction Rule~\ref{rrule:midway_leaf}, where the thick edge $vw_1$ is removed from $F$ and the dotted edge $vw_2$ is added to $F$.\label{fig:midway_leaf}}
\end{figure}
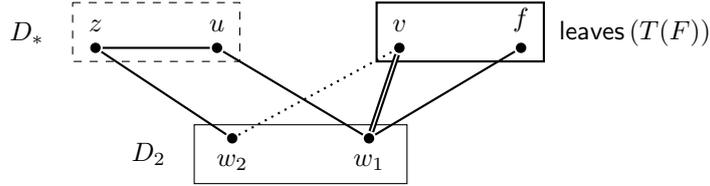

\begin{rrule}
    \label{rrule:ds_to_leaf_to_third}
    Let $f,u,v,z,w_2,w_3 \in V(G)$ be such that $v \in \leaves{(T(F)}$, $w_3 \in N_{T(F)}(u) \cap P_F(u,f)$, $w_2w_3, vz \in E(T(F))$, $u,f$ is an $F$-pair, $z \in D_*$, and $u$ $F$-links $f$ to $v$. If $v$ is adjacent to some $w_1 \in P_F(u,f) \setminus \{w_2\}$ that $F$-links $w_2$ to $f$, remove $vw_1$ from $F$ and add $w_2w_3$ to $F$.
\end{rrule}

\begin{sproof}{\ref{rrule:ds_to_leaf_to_third}}
    Let $F' = F \setminus \{vw_3\}$.
    Since $w_1$ $F$-links $w_2$ to $f$, we have that $w_2$ $F$-links $w_1$ to $v$, so edge $w_2w_3$ belongs to the unique cycle of $G \setminus F'$ and, consequently, $F'' = F' \cup \{w_2w_3\}$ is a feedback edge set of $G$ of size $q$.
    Since $\deg_{T(F)}(w_3) = \deg_{T(F)}(w_2) = 2$, $\deg_{T(F'')}(w_3) = \deg_{T(F'')}(w_2) = 1$, however, $v$ is adjacent to both $z$ and $w_1$ in $T(F'')$, so it holds that $\leaves{T(F'')} = \leaves{T(F)} \cup \{w_2, w_3\} \setminus \{v\}$.
\end{sproof}

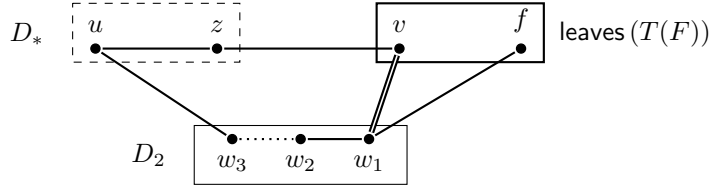
\begin{figure}[!htb]
    \centering
        \begin{tikzpicture}[xscale=2, yscale=1.2, rotate=-90]
            %\draw[help lines] (-5,-2) grid (5,2);
            \GraphInit[unit=3,vstyle=Normal]
            \SetVertexNormal[Shape=circle, FillColor = black, MinSize=3pt]
            \tikzset{VertexStyle/.append style = {inner sep = \inners,outer sep = \outers}}
            \SetVertexLabelOut
            
            \draw (-0.15, -0.4) rectangle (0.5, 1);
            \Vertex[x=0, y=0.75, Math, Lpos=-90]{w_1}
            \Vertex[x=0, y=0.3, Math, Lpos=-90]{w_2}
            \Vertex[x=0, y=-0.15, Math, Lpos=-90]{w_3}
            
            \draw[thick] (-1.5, 0.8) rectangle (-0.85, 1.9);
            \Vertex[x=-1, y=1.75, Math, Lpos=90]{f}
            \Vertex[x=-1, y=0.95, Math, Lpos=90]{v}
            
            \draw[dashed] (-1.5, -1.2) rectangle (-0.85, -0.1);
            \Vertex[x=-1, y=-0.25, Math, Lpos=90]{z}
            \Vertex[x=-1, y=-1.05, Math, Lpos=90]{u}
            
            \Edges(v,z,u,w_3)
            \Edges(w_2,w_1,f)
            \Edge[style={double}](w_1)(v)
            \Edge[style={dotted}](w_3)(w_2)
            
            \node at (0.175, -0.7) {$D_2$};
            \node at (-1.165, -1.5) {$D_*$};
            \node at (-1.165, 2.5) {$\leaves{T(F)}$};
        \end{tikzpicture}
        \caption{Example for Reduction Rule~\ref{rrule:ds_to_leaf_to_third}, where the dotted edge $w_2w_3$ is added to $F$ and the thick edge $vw_1$ is removed from $F$.\label{fig:ds_to_leaf_to_third}}
\end{figure}

Our next lemma guarantees that the exhaustive application of rules~\ref{rrule:inner_edge} through~\ref{rrule:ds_to_leaf_to_third} finds a set of paths in $T(F)$ that have many vertices of degree two in $G$; essentially, at this point, we are done minimizing the number of incident edges to vertices of $D_2$.

\begin{lemma}
    Let $a,b \in V(G)$ be an $F$-pair such that $a,b \notin D_2$, $|P^*_F(a,b)| \geq 5$, and let $w$ be one of its inner vertices at distance at least three from both $a$ and $b$.
    If none of the rules between Rule~\ref{rrule:inner_edge} and Rule~\ref{rrule:ds_to_leaf_to_third} are applicable, then $\deg_G(w) = \deg_{T(F)}(w)$.
\end{lemma}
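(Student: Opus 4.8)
The plan is to argue by contradiction: assume $w$ is an inner vertex of $P_F(a,b)$ at distance at least three from both $a$ and $b$ with $\deg_G(w) > \deg_{T(F)}(w)$, and exhibit an applicable instance of one of Rules~\ref{rrule:inner_edge}--\ref{rrule:ds_to_leaf_to_third}, contradicting the hypothesis. First I would fix notation: since $G$ is connected and $F$ is a minimum feedback edge set, $T(F)$ is a spanning tree of $G$, so $E(G)\setminus E(T(F))=F$. Write $P_F(a,b)=(a=p_0,p_1,\dots,p_m,p_{m+1}=b)$, so $m=|P^*_F(a,b)|\geq 5$; by the definition of an $F$-pair, each $p_j$ with $1\le j\le m$ lies in $D_2$ and therefore has exactly the two $T(F)$-neighbours $p_{j-1}$ and $p_{j+1}$. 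Since $a,b\notin D_2$, the vertex $w$ is some $p_i$ with $3\le i\le m-2$; in particular $\deg_{T(F)}(w)=2$ and $p_1,p_2,p_{i-1},p_{i+1}$ all belong to $D_2$. From $\deg_G(w)>2$ there is an edge $wx\in F$, and $x$ lies in one of the two components of $T(F)\setminus\{w\}$; after possibly swapping $a$ and $b$ and reversing the indexing of the path, assume $x$ lies in the component containing $a$. Because $p_1,\dots,p_{i-1}\in D_2$ have no $T(F)$-neighbour outside $\{p_0,\dots,p_i\}$, every vertex of that component other than $p_0,\dots,p_{i-1}$ hangs in a subtree attached at $a$, which organises the case split on the position of $x$.

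The core of the proof is then a short case analysis, each case producing an applicable rule. If $x=p_j$ lies on $P_F(a,b)$, then $j\in\{0,\dots,i-1\}$ and $j\neq i-1$ (because $p_{i-1}p_i\in E(T(F))$), so $0\le j\le i-2$; for $1\le j\le i-2$ I would apply Rule~\ref{rrule:inner_edge} with $f=p_i$, $w_1=p_j$, $w_2=p_{i-1}$, $u=a$ (then $(u,f)$ is an $F$-pair, $p_{i-1}$ is the unique $T(F)$-neighbour of $p_i$ that $F$-links it to $a$, and $p_j$ $F$-links $p_{i-1}$ to $a$), while the boundary value $x=a$ is handled by the mirrored instance $f=a$, $w_1=p_i$, $w_2=p_1$, $u=p_{i+1}$. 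If $x\notin P_F(a,b)$ is not a leaf of $T(F)$, I would apply Rule~\ref{rrule:d2_to_notleaf} with $v=x$, $(u,f)=(a,b)$, $w_1=p_i$, $w_2=p_{i-1}$, the link condition holding because $p_{i-1}$ lies on the unique path of $T(F)$ from $x$ to $p_i$. If instead $x\notin P_F(a,b)$ is a leaf of $T(F)$, let $x'$ be its unique neighbour; as $T(F)$ is a spanning tree on at least three vertices, $x'\notin\leaves{T(F)}$, so $x'\in D_2\cup D_*$. When $x'\in D_2$, Rule~\ref{rrule:midway_leaf} applies with $v=x$, $w_2=x'$, $(u,f)=(a,b)$, $w_1=p_i$, and $z$ the first vertex outside $D_2$ on the $T(F)$-path leaving $x$ through $x'$ (so that $(z,x)$ is an $F$-pair and, since $a$ lies on the path of $T(F)$ from $b$ to $z$ or equals $z$, $a$ $F$-links $b$ to $z$). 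When $x'\in D_*$, Rule~\ref{rrule:ds_to_leaf_to_third} applies with $v=x$, $z=x'$, $(u,f)=(a,b)$, $w_3=p_1$, $w_2=p_2$, $w_1=p_i$, using $i\ge 3$ to ensure $p_i\neq p_2$ and that $p_i$ $F$-links $p_2$ to $b$.

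I expect the genuine difficulty to be bookkeeping rather than a single conceptual obstacle: each of Rules~\ref{rrule:inner_edge}--\ref{rrule:ds_to_leaf_to_third} comes with several ``$F$-links'' and distinctness side conditions that must be checked in every case, with particular care for the two boundary sub-cases ($x=a$, and $x$ a leaf attached to a branch vertex). This is also where the hypotheses of the lemma are used: $|P^*_F(a,b)|\geq 5$ together with the distance bounds is exactly what guarantees that an inner vertex $w=p_i$ with $3\le i\le m-2$ exists and that $p_1,p_2,p_{i-1},p_{i+1}$ are genuine $D_2$-vertices, which is what makes the $F$-pairs chosen above valid and what makes each edge swap strictly increase $|\leaves{T(F)}|$. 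One further point I would keep track of is the orientation of the swap in each rule --- which edge leaves $F$ and which tree edge enters it --- so that the instance being invoked is matched correctly against its statement and its safeness argument.
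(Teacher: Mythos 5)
Your proposal is correct and follows essentially the same route as the paper: assume $\deg_G(w) > \deg_{T(F)}(w)$, take an $F$-edge $wx$, and split into the same four cases ($x$ on $P_F(a,b)$, $x$ off-path and not a leaf, $x$ a leaf with a $D_2$-neighbour, $x$ a leaf with a $D_*$-neighbour), invoking Rules~\ref{rrule:inner_edge}--\ref{rrule:ds_to_leaf_to_third} respectively to contradict the hypothesis. Your explicit instantiations differ only in orientation/bookkeeping from the paper's (e.g.\ in the last case you take $w_1=w=p_i$, $w_3=p_1$, $w_2=p_2$, which if anything matches the literal statement of Rule~\ref{rrule:ds_to_leaf_to_third} more cleanly than the paper's terse hint of setting $w$ to $w_3$), and they verify correctly.
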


\begin{proof}
    Suppose that this is not the case, and let $v \in N_G(w) \setminus N_{T(F)}(w)$.
    \begin{itemize}
        \item If $v \in P_F(a,b)$, suppose w.l.o.g. that $w$ $F$-links $v$ to $a$; moreover, let $w_2$ be the unique neighbor of $v$ that $F$-links it to $w$.
        In this case, Rule~\ref{rrule:inner_edge} is applicable: $a,v$ are an $F$-pair with the required properties, $w_2$ has the same role here as in the definition of the rule, and we may set $w$ as $w_1$.
        \item If $v \notin \leaves{T(F)}$, we may assume, w.l.o.g., that $w$ $F$-links $v$ to $b$.
        We can apply Rule~\ref{rrule:d2_to_notleaf}: $v$ is not adjacent to $w$ in $T(F)$, so $w$ has one neighbor $w_2 \in D_2$ that $F$-links it to $v$.
        \item If $v \in \leaves{T(F)}$ and its unique neighbor is $w_2 \in D_2 \setminus P_F(a,b)$, we again may assume w.l.o.g. that $w_2$ $F$-links $a$ and $v$. In this case, Rule~\ref{rrule:midway_leaf} is applicable: there is some $z \notin \leaves{T(F)}$ (possibly $z \in \{a, w_2\}$) that forms an $F$-pair with $v$, where $P_F(z,v) \setminus \{z,v\}$ may be empty if $z = w_2$.
        \item If $v \in \leaves{T(F)}$ and $z \in D_*$ is its unique neighbor in $T(F)$, then, since there are at least two other vertices between $w$ and each of the endpoints of $P_F(a,b)$, Rule~\ref{rrule:ds_to_leaf_to_third} is applicable; to see that this is the case, set $w$ to $w_3$ in the definition of the rule and $w_1,w_2$ as appropriate to depending on which endpoint of $P_F(a,b)$ $F$-links $w$ to $v$.
    \end{itemize}
    Thus, we conclude that $v$ cannot exist and that the statement holds.
\end{proof}

At this point, paths between $F$-pairs are mostly the same as in $G$: only the to vertices closest to each endpoint may be adjacent to some leaves of $T(F)$, while all others have degree two in $G$.
We say that $u,f$ are a \textit{strict} $F$-pair if for every $w \in P^*_F(u,f)$, $\deg_G(w) = 2$.

\begin{rrule}
    \label{rrule:glueing}
    Let $u,f \in V(G)$ be a strict $F$-pair. If there are adjacent vertices $w_1, w_2 \in P^*_F(u,f)$ such that either $w_1, w_2 \in K$ or $w_1, w_2 \notin K$, add a new vertex $w^*$ to $G$ that is adjacent to $N_G(w_1) \cup N_G(w_2) \setminus \{w_1, w_2\}$ and remove both $w_1, w_2$ from $G$. If $w_1,w_2 \in K$, set $w^*$ as a terminal vertex.
\end{rrule}

\begin{sproof}{\ref{rrule:glueing}}
    Correctness follows directly from the hypotheses that $w_1 \in K$ if and only if $w_2 \in K$ and that both are degree two vertices. So, in a minimal solution $H$ to $(G,K)$, either both vertices are in $H$ or neither is in $H$.
    For the converse, any minimal solution $H'$ to the reduced instance $(G', K')$ either has $w^*$, in which $H$ is obtained by replacing $w^*$ with both $w_1$ and $w_2$, or $w^* \notin V(H)$, in which case $H'$ itself is a solution to $(G,K)$.
\end{sproof}

\begin{rrule}
    \label{rrule:path_compression}
    Let $u,f \in V(G)$ be a strict $F$-pair such that $P^*_F(u,f) \geq 4$. If Rule~\ref{rrule:glueing} is not applicable, replace $P^*_F(u,f)$ with three vertices $a,t,b$ so that $a$ is adjacent to $u$, $b$ to $f$, and $t$ to both $a$ and $b$. Furthermore, $t$ is a terminal of the new graph if and only if $K \cap P^*_F(u,f) \neq \emptyset$.
\end{rrule}

\begin{sproof}{\ref{rrule:path_compression}}
    Let $G'$ and $K'$ be, respectively, the graph and set of terminals obtained after the application of the rule.
    Suppose $H$ is a minimal solution to the \pname{$k$-in-a-tree} instance $(G,K)$, i.e every vertex of $H$ is contained in a path between two terminals.
    Note that, if $P^*_F(u,f) \cap V(H) = \emptyset$, $H$ is also a solution to the instance $(G', K')$; as such, for the remainder of this paragraph, we may assume w.l.o.g. that $P^*_F(u,f) \cap V(H) \neq \emptyset$ and that $u \in V(H)$.
    If $P_F(u,f) \setminus \{u\} \nsubseteq V(H)$, $H' = H \cup \{a,t\} \setminus P^*_F(u,f)$ is a solution to $(G', K')$: since at least one vertex of $P_F(u,f)$ is not in $V(H)$ and every $w \in P^*_F(u,f)$ has degree two in $G$, the subpaths of $P_F(u,f)$ in $H$ are used solely for the collection of terminal vertices of $P_F(u,f)$; consequently, $H'$ is an induced tree of $G'$ that contains all elements of $K'$.
    On the other hand, if $P_F(u,f) \subseteq V(H)$, $H' = H \cup \{a,t,b\} \setminus P^*_F(u,f)$ is a solution to $(G', K')$; to see that this is the case, note that $H \setminus P^*_F(u,f)$ is a forest with exactly two trees where $u$ and $f$ are in different connected components since $P_F(u,f)$ is the unique path between them in $H$, and $K' \subseteq V(H')$ since $P^*_F \subseteq V(H)$ and $K \setminus P
   ^*_F(u,f) \subseteq V(H) \setminus P^*_F(u,f)$.
    
    For the converse, let $H'$ be a minimal solution to $(G', K')$.
    If $\{a,t,b\} \subseteq V(H')$, $H = H' \cup \{P^*_F(u,f)\} \setminus \{a,t,b\}$ is a solution of $(G,K)$, as we are replacing one path consisting solely of degree two vertices with another that satisfies the same property.
    If $a \in V(H')$ but $b \notin V(H')$, then $t \in K'$ (recall that $H'$ is minimal) and $u \in V(H)$, implying that there is at least one terminal vertex in $P^*_F(u,f)$.
    We branch our analysis in the following subcases, where $v \in P^*_F(u,f) \cap N_G(f)$:
    \begin{itemize}
        \item If $v \notin K$, then $H = H' \cup P^*_F(u,f) \setminus \{v\} \setminus \{a,t\}$ is a solution to $(G,K)$: all terminals of $P^*_F(u,f)$ are contained in $P^*_F(u,v)$ and no cycle is generated since all vertices of the path $P^*_F(u,v)$ have degree two.
        \item If $v \in K$ but $f \notin V(H')$, $H = H' \cup P^*_F(u,f) \setminus \{a,t\}$ is a solution to $(G,K)$: we cannot create any new cycle since $f \notin V(H)$ and $u,f$ form a strict $F$-pair, moreover all terminals of $P^*_F(u,f)$ are contained in $H$.
        \item If $v \in K$ and $f \in V(H')$, there is at least one non-terminal vertex $w \in P^*_F(u,v)$ since Rule~\ref{rrule:glueing} is not applicable to $P_F(u,f)$. As such, we set $H = H' \cup P^*_F(u,w) \cup P^*_F(w,f) \setminus \{a,t\}$ and obtain a solution to $(G,K)$.
    \end{itemize}
    Finally, if $\{a,t,b\} \cap V(H') = \emptyset$, it follows immediately from the assumption that $H'$ is a solution to $(G', K')$ that $H'$ is also a solution to $(G, K)$.
\end{sproof}

We are now ready to state our kernelization theorem.

\begin{theorem}
    When parameterized by the size $q$ of a feedback edge set, \pname{$k$-in-a-tree} admits a kernel with $16q$ vertices and $17q$ edges that can be computed in $\bigO{q^2 + qn}$ time.
\end{theorem}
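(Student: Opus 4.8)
The plan is to bundle Rules~\ref{rrule:deg_one}--\ref{rrule:path_compression} into a single kernelization procedure run in three phases and then bound the size of the reduced instance by a counting argument on $T(F)$. First I would apply Rule~\ref{rrule:deg_one} exhaustively, so that the current graph $G$ has minimum degree two; since $G$ is connected this forces every minimum feedback edge set $F$ to induce a spanning tree $T(F) = G \setminus F$, and Observation~\ref{obs:leaf_count} then gives $|\leaves{T(F)}| \le 2q$ together with at most $2q$ vertices of degree at least three. Second, I would fix any minimum feedback edge set $F$ and apply Rules~\ref{rrule:inner_edge}--\ref{rrule:ds_to_leaf_to_third} until none applies; as their safeness proofs show, each such rule swaps one edge of $F$ (so $|F|$ stays $q$) while strictly increasing $|\leaves{T(F)}|$, and since the leaf count never exceeds $2q$ this phase halts after $\bigO{q}$ swaps. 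By the preceding lemma, at that point the internal vertices of every sufficiently long $F$-pair path have the same degree in $G$ as in $T(F)$. Third, I would apply Rules~\ref{rrule:glueing} and~\ref{rrule:path_compression} until neither applies; each of these only rewrites the interior of a strict $F$-pair, hence leaves $F$ and all degree-$\ge 3$ vertices untouched, keeps the minimum feedback edge set size at $q$, and strictly decreases $|V(G)|$, so this phase halts as well. At its fixed point no strict $F$-pair has four or more internal vertices, since otherwise one of Rules~\ref{rrule:glueing},~\ref{rrule:path_compression} would fire.

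For the size bound I would call a vertex of $T(F)$ \emph{special} if it is a leaf of $T(F)$, has degree at least three in $T(F)$, or is an endpoint of an edge of $F$; every other vertex has degree exactly two in both $T(F)$ and $G$, so each maximal path of non-special vertices is the interior of a strict $F$-pair (its endpoints are special by maximality, its interior consists of degree-two vertices of $G$) and therefore has at most three vertices by the previous paragraph. The key observation is that every leaf of $T(F)$ is itself an endpoint of an edge of $F$ (its $T(F)$-degree is one but its $G$-degree is at least two), so the leaves together with the remaining degree-two $F$-endpoints number at most $2q$; adding the at most $2q$ vertices of degree at least three from Observation~\ref{obs:leaf_count} shows there are at most $4q$ special vertices. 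Suppressing all non-special vertices of $T(F)$ (each has degree two, so the operation is well defined and produces neither loops nor parallel edges) yields a tree on the special vertices, hence with at most $4q - 1$ edges, one per maximal non-special path. Consequently $|V(G)| \le 4q + 3(4q - 1) \le 16q$, and since $T(F)$ is a spanning tree of $G$, $|E(G)| = |V(G)| - 1 + |F| \le 17q$.

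Correctness of the reduced instance is inherited term by term from the safeness of Rules~\ref{rrule:deg_one}--\ref{rrule:path_compression}. For the running time I would argue that computing the spanning tree and $F$ costs $\bigO{n + q}$, that each of the $\bigO{q}$ rounds of the second phase can locate and carry out one rule application in $\bigO{n + q}$ time by walking the relevant tree paths, and that the first and third phases run in linear time; altogether this gives $\bigO{q^2 + qn}$.

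The step I expect to be the real obstacle is not any single reduction rule — their safeness is already in hand — but the bookkeeping that glues the three phases together. One must check that after Rules~\ref{rrule:inner_edge}--\ref{rrule:ds_to_leaf_to_third} stabilize, the hypotheses of the preceding lemma really do control every vertex that matters for the count (the at most two vertices near each end of a long $F$-pair path, the short $F$-pair paths, and the paths hanging off leaves through $D_2$ vertices), and, in the other direction, that Rules~\ref{rrule:glueing} and~\ref{rrule:path_compression} never recreate a long ``unclean'' path, never disturb $F$, and never change the minimum feedback edge set size, so that the structural decomposition and the constants $16q$ and $17q$ survive to the end. Making sure the set of special vertices is counted without double-counting leaves, branch vertices, and $F$-endpoints is precisely what pins the constants down.
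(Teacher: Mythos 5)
Your proposal is correct, and while it runs the same three-phase pipeline (Rule~\ref{rrule:deg_one}, the leaf-maximizing swaps of Rules~\ref{rrule:inner_edge}--\ref{rrule:ds_to_leaf_to_third}, then Rules~\ref{rrule:glueing}--\ref{rrule:path_compression}), your size analysis takes a genuinely different route from the paper's. The paper only compresses, for each path of $\mathcal{P}_F$ (endpoints outside $D_2$), the single longest strict subpath, and therefore needs the leaf-maximization phase together with the structural lemma to argue that at most four ``dirty'' vertices survive near the ends of each path, after which it balances leaves, branch vertices, and the two path families $\mathcal{P}_\alpha,\mathcal{P}_\beta$ to reach $16q-\beta$. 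You instead apply Rules~\ref{rrule:glueing} and~\ref{rrule:path_compression} exhaustively to \emph{all} strict $F$-pairs and count via ``special'' vertices (leaves, $T(F)$-degree $\geq 3$, and $F$-endpoints): since leaves are $F$-endpoints after Rule~\ref{rrule:deg_one}, there are at most $4q$ special vertices, suppressing the rest of $T(F)$ leaves at most $4q-1$ maximal non-special paths, and each such path is the interior of a strict $F$-pair, hence has at most three vertices at the fixed point, giving $4q+3(4q-1)\leq 16q$ vertices and $\leq 17q$ edges. This absorbs interior $F$-endpoints directly into the special-vertex budget, so your count never actually uses the guarantees of Rules~\ref{rrule:inner_edge}--\ref{rrule:ds_to_leaf_to_third} or the lemma (running that phase is harmless and keeps you within the claimed $\bigO{q^2+qn}$, but it is logically dispensable in your argument). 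The ``bookkeeping'' you worry about at the end is in fact benign: Rules~\ref{rrule:glueing} and~\ref{rrule:path_compression} only delete interior vertices of strict $F$-pairs, which have degree two in $G$ and are not $F$-endpoints, so $F$, the leaves and branch vertices of $T(F)$, minimum degree two, connectivity, and $|E|-|V|$ (hence the minimum feedback edge set size $q$) are all preserved, and the replacement vertices are again non-special; the two things worth stating explicitly, which you implicitly rely on, are that Rule~\ref{rrule:path_compression} is only invoked when Rule~\ref{rrule:glueing} is inapplicable (its safeness proof needs this) and that its applicability does not require the endpoints of the strict $F$-pair to lie outside $D_2$, which the rule statements indeed do not demand. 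With those remarks made, your argument is sound and somewhat simpler than the paper's.
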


\begin{proof}
    Let $G$ be our $n$-vertex input graph and $K$ a set of terminals.
    We begin by applying Rule~\ref{rrule:deg_one} until no degree one vertex remains in $G$.
    Then, we take any feedback edge set $F$ of $G$ -- we can obtain one in $\bigO{n}$ time by listing the set of back edges of a depth-first search tree of $G$ -- and construct $\mathcal{P}_F$ in $\bigO{n}$ time.
    
    Let $\mathcal{P}_F$ be the set of paths between all $F$-pairs such that, for each $P_F(u,f) \in \mathcal{P}_F$ it holds that $u,f \notin D_2$.
    By Observation~\ref{obs:leaf_count}, we have at most $2q$ leaves in $T(F)$ and $2q$ vertices in $D_*$, so there are at most $4q$ paths in $\mathcal{P}_F$.
    
    Each iteration of the first part of the algorithm is described below.
    If there is some path $P_F(u,f) \in \mathcal{P}_F$ with $f \in \leaves{T(F)}$ and $P^*_F(u,f) \neq \emptyset$, check if there is an edge in $F$ between $f$ and one of the vertices of $P^*_F(u,f) \cup \{u\}$; if this is the case, apply Rule~\ref{rrule:inner_edge} in $\bigO{n}$ time and move on to the next iteration; by Observation~\ref{obs:leaf_count}, $|\mathcal{P}_F| \leq 4q$, so we can inspect each path and perform this step in $\bigO{q + n}$ time.
    Otherwise, let $vw_1 \in F$ be such that $w_1 \notin \leaves{T(F)}$, and $w_1 \in P_F(u,f) \in \mathcal{P_F}$.
    If $v \notin \leaves{T(F)}$ and $w_1$ is adjacent to some $w_2 \in P^*_F(u,f)$ that $F$-links to $v$, apply Rule~\ref{rrule:d2_to_notleaf}; we can check if these conditions are satisfied in $\bigO{n}$ time, in particular, $F$-linking is a matter of testing if $w_1$ and $v$ are in the same connected component of $T(F) \setminus \{w_2\}$.
    If, however, $v \in \leaves{T(F)} \setminus \{f\}$, $v$ forms an $F$-pair with $z \in V(G)$, $w_2 \in D_2 \cap N_{T(F)}(v)$, and $u$ $F$-links $f$ to $z$, Rule~\ref{rrule:midway_leaf} is applicable in $\bigO{n}$ time.
    Finally, if $v \in \leaves{T(F)} \setminus \{f\}$ is adjacent to some $z \in D_*$ which $F$-links $u$ and $v$, $w_3 \in P_F(u,f) \cap N_{T(F)}(u)$ is adjacent to $w_2$ which in turn is $F$-linked to $f$ by $w_1$, Rule~\ref{rrule:ds_to_leaf_to_third} is applicable is $\bigO{n}$ time.
    
    If none of the conditions stated in the previous paragraph is satisfied, we stop the algorithm: the number of leaves of $T(F)$ cannot be increased by single edge swaps.
    As to the number of iterations, rules~\ref{rrule:inner_edge} through~\ref{rrule:ds_to_leaf_to_third} guarantee that, when applicable, the number of leaves increases by exactly one.
    Since each one of their applications can be performed in $\bigO{q + n}$ time and we have at most $\max_F |\leaves{T(F)}| \leq 2q$ iterations, the above algorithm can be executed in $\bigO{q^2 + qn}$ time.
    
    Let $\mathcal{P}_\alpha$ be the set of paths of $\mathcal{P}_F$ whose endpoints are strict $F$-pairs.
    For each path $P_F(u,f)$ in $\mathcal{P}_F \setminus \mathcal{P_\alpha}$, let $u',f' \in P_F(u,f)$ be the strict $F$-pair that maximizes $|P_F(u',f')|$.
    Note that $|P_F(u,f) \setminus P_F(u',f')| \leq 4$ if there are leaves adjacent to each of the vertices at distance two from the endpoints; we refer to Figure~\ref{fig:substrict} for an example.
    Finally, define $\mathcal{P}_\beta$ as $P_F(u',f')$ for each $P_F(u,f) \in \mathcal{P}_F \setminus \mathcal{P_\alpha}$.
        
    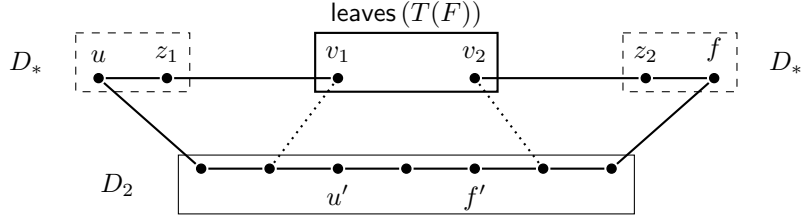
\begin{figure}[!htb]
        \centering
            \begin{tikzpicture}[xscale=2, yscale=1.2, rotate=-90]
                %\draw[help lines] (-5,-2) grid (5,2);
                \GraphInit[unit=3,vstyle=Normal]
                \SetVertexNormal[Shape=circle, FillColor = black, MinSize=3pt]
                \tikzset{VertexStyle/.append style = {inner sep = \inners,outer sep = \outers}}
                \SetVertexLabelOut
                
                \draw (-0.15, -1.2) rectangle (0.5, 1.8);
                \node at (0.175, -1.6) {$D_2$};
                \Vertex[x=0, y=1.65, NoLabel]{r2}
                \Vertex[x=0, y=1.20, NoLabel]{r1}
                \Vertex[x=0, y=0.75, Math, Lpos=-90, L={f'}]{fp}
                \Vertex[x=0, y=0.3, Math, NoLabel]{mid}
                \Vertex[x=0, y=-0.15, Math, Lpos=-90, L={u'}]{up}
                \Vertex[x=0, y=-0.60, NoLabel]{l2}
                \Vertex[x=0, y=-1.05, NoLabel]{l1}
                
                \draw[dashed] (-1.5, 1.725) rectangle (-0.85, 2.475);
                \node at (-1.165, 2.8) {$D_*$};
                \Vertex[x=-1, y=2.325, Math, Lpos=90]{f}
                \Vertex[x=-1, y=1.875, Math, Lpos=90]{z_2}

                \draw[dashed] (-1.5, -1.875) rectangle (-0.85, -1.125);
                \node at (-1.165, -2.2) {$D_*$};
                \Vertex[x=-1, y=-1.275, Math, Lpos=90]{z_1}
                \Vertex[x=-1, y=-1.725, Math, Lpos=90]{u}
                
                \draw[thick] (-1.5, -0.3) rectangle (-0.85, 0.9);
                \node at (-1.7, 0.3) {$\leaves{T(F)}$};
                \Vertex[x=-1, y=0.75, Math, Lpos=90]{v_2}
                \Vertex[x=-1, y=-0.15, Math, Lpos=90]{v_1}
                
                \Edges(v_1,z_1,u,l1,l2,up,mid,fp,r1,r2,f,z_2,v_2)
                \Edge[style={dotted}](v_1)(l2)
                \Edge[style={dotted}](r1)(v_2)
            \end{tikzpicture}
            \caption{Example of a path $P_F(u,f)$ and its longest subpath between a strict $F$-pair $u',f'$; dotted edges belong to $F$.\label{fig:substrict}}
    \end{figure}
    
    Now, for each path in $\mathcal{P}_\alpha \cup \mathcal{P_\beta}$, we can apply Rule~\ref{rrule:glueing}; since at each step we remove one vertex from $G$, all applications of the rule amount to $\bigO{n}$-time.
    Afterwards, we apply Rule~\ref{rrule:path_compression} to compress the paths as much as possible; again, this entire process is feasibly done in $\bigO{n}$ steps.
    As such, each path in $\mathcal{P}_\alpha \cup \mathcal{P}_\beta$ has size at most three; consequently each path in $\mathcal{P}_F$ has size at most seven.
    At first glance, this would yield a kernel of size $4q + 7\cdot4q = 32q$; however, we can observe that, for each edge in $F$ incident to a vertex in some path of $\mathcal{P}_\beta$, we are essentially \textit{reducing the number of leaves of $T(F)$ and large degree vertices in one unit each}: the bound of $2q$ leaves is only met with equality if \textit{every} edge of $F$ is incident to two leaves of $T(F)$.
    Therefore, if $\beta = |\mathcal{P}_\beta|$, the kernel's size is given by $|\leaves{T(F)}| + |D_*| + 3|\mathcal{P}_\alpha| + 7|\mathcal{P}_\beta| \leq (2q - \beta) + (2q - \beta) + 3(4q - 2\beta) + 7\beta = 16q - \beta$, which is maximized when $\beta = 0$.
    Regarding the number of edges, the contracted graph has $16q$ vertices and a feedback edge set of size $q$, so it has at most $17q$ edges.
    Finally, since the first part of the algorithm runs in $\bigO{q^2 + qn}$ time and the latter in $\bigO{n}$ time, we have a total complexity of $\bigO{q^2 + qn}$ time.
\end{proof}

\section{Kernelization lower bounds}
\label{sec:klb}

In this section, we apply the cross-composition framework of \citeauthor{cross_composition}~\cite{cross_composition} to show that, unless $\NP \subseteq \coNP/\poly$, \pname{$k$-in-a-tree} does not admit a polynomial kernel under bandwidth, nor when parameterized by the distance to any graph class with at least one member with $t$ vertices for each integer $t$, which we collectively call non-trivial classes.
We say that an \NPH\ problem $R$ \textit{OR-cross-composes} into a parameterized problem $L$ if, given $t$ instances $\{y_1, \dots, y_t\}$ of $R$, we can construct, in time polynomial in $\sum_{i \in [t]}|y_i|$, an instance $(x, k)$ of $L$ that satisfies $k \leq p(\max_{i \in [t]} |y_i| + \log t)$ and admits a solution if and only if at least one instance $y_i$ of $R$ admits a solution; we say that $R$ \textit{AND-cross-composes} into $L$ if the first two conditions hold but all $(x,k)$ has a solution if and only if all $t$ instances of $R$ admit a solution. 

\subsection{Bandwidth}

\begin{theorem}
    When parameterized by bandwidth, \pname{$k$-in-a-tree} does not admit a polynomial kernel unless $\NP \subseteq \coNP/\poly$.
\end{theorem}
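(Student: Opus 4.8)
The plan is to give an AND-cross-composition from \pname{$k$-in-a-tree} with $k$ part of the input (which is \NPH~\cite{induced_trees_complexity}) into \pname{$k$-in-a-tree} parameterized by bandwidth; by the cross-composition framework~\cite{cross_composition} this forbids a polynomial kernel unless $\NP \subseteq \coNP/\poly$. As polynomial equivalence relation I would group instances by their number of vertices, collapsing malformed instances and instances with at most one terminal (which are trivially solvable) into designated classes; thus I may assume the input is a family $(G_1, K_1), \dots, (G_t, K_t)$ of instances on pairwise disjoint vertex sets with $|V(G_i)| \leq n$ and $|K_i| \geq 2$ for every $i$.

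The composed instance is built by chaining. For each $i$ I fix two distinct terminals $x_i, y_i \in K_i$ and obtain $G$ from $G_1 \cup \dots \cup G_t$ by identifying, for every $i \in \{1, \dots, t-1\}$, the vertex $y_i$ of $G_i$ with the vertex $x_{i+1}$ of $G_{i+1}$; call the resulting vertex $g_{i+1}$. I set $K = \bigcup_i K_i$, so each $g_{i+1}$ is a single terminal lying in both $K_i$ and $K_{i+1}$. This is computable in time polynomial in $\sum_i |y_i|$.

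For the easy direction of correctness, if each $(G_i, K_i)$ admits an induced tree $S_i \supseteq K_i$, then $\bigcup_i S_i$ is connected (consecutive $S_i$ share $g_{i+1}$), acyclic (trees glued at single cut vertices along a path), and induced in $G$ (every edge of $G$ lies inside some $G_j$, and $V(\bigcup_i S_i) \cap V(G_j) = V(S_j)$), hence it solves $(G, K)$. The hard direction --- and the step I expect to be the main obstacle --- is showing that every minimal solution $T$ of $(G, K)$ restricts to a valid solution on each $G_i$. Here I would exploit that $g_{i+1}$ is a cut vertex of $G$ separating $G_1 \cup \dots \cup G_i$ from $G_{i+1} \cup \dots \cup G_t$, since consecutive copies meet in exactly one vertex and non-consecutive copies in none: as $T$ is connected and meets both $G_1$ and $G_t$ it must contain every $g_{i+1}$, and if $T \cap G_i$ were disconnected then some component would avoid both $g_i$ and $g_{i+1}$ and hence be unreachable in $T$, whereas two components containing $g_i$ and $g_{i+1}$ respectively could only be joined through vertices of $G_i$ that already lie in $T \cap G_i$, a contradiction. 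Therefore $T \cap G_i$ is a connected induced subgraph of $G_i$, acyclic because $T$ is, and contains $K_i$, so $(G_i, K_i)$ is a yes-instance. Consequently $(G, K)$ is a yes-instance exactly when all $(G_i, K_i)$ are.

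For the parameter I would order $V(G)$ by concatenating the vertex sets of $G_1, \dots, G_t$ in turn, always placing $g_{i+1}$ last among the vertices of $G_i$; then each $G_i$ occupies an interval of at most $n$ positions, and since every edge of $G$ is internal to some $G_i$ its endpoints lie within distance less than $n$ in this ordering. Hence the bandwidth of $G$ is at most $n$, which is polynomially bounded in $\max_i |y_i| + \log t$, and the cross-composition framework then yields the theorem. The linear layout and the forward implication are routine; essentially all the care goes into the cut-vertex decomposition of the global solution tree.
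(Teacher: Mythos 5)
Your proposal is correct and follows essentially the same route as the paper: an AND-cross-composition of \pname{$k$-in-a-tree} into itself that chains the input instances along a path, argues via the cut structure that a global solution restricts to a solution of each instance, and bounds the bandwidth by the (common) instance size via a blockwise linear layout. The only difference is cosmetic --- you identify the chosen terminals of consecutive instances into a single shared terminal, whereas the paper keeps the graphs disjoint and adds a bridge edge between the two terminals --- and this does not change the substance of the argument.
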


\begin{proof}
    We are going to show that \pname{$k$-in-a-tree} AND-cross-composes into itself.
    Let $\mathcal{H} = \{(H_1, K_1), \dots, (H_t, K_t)\}$ be a set of instances of \pname{$k$-in-a-tree} where each graph has $n$ vertices, $\ell \geq 3$ of which are terminals.
    The input $(G, K)$ to \pname{$k$-in-a-tree} parameterized by bandwidth is constructed as follows: $G$ is initially the disjoint union of the $t$ input graphs and $K = \bigcup_{i \in [t]} K_i$; now, for each $i \in [t] $, take two distinct terminals $v_1(i), v_2(i)$ and add edge $v_2(i)v_1(i+1)$ for every $i \in [t-1]$.
    Essentially, we are organizing the $H_i$'s in a path.
    
    Suppose now that every $H_i$ has a solution $T_i$ and note that $T = \bigcup_{i \in [t]} V(T_i)$ is a solution to $(G, K)$: $T$ is a tree and every terminal in $K$ has a path to another.
    For the converse, take a solution $T$ to $(G,K)$ and let $T_i = T \cap V(H_i)$.
    To see that $T_i$ is in fact a solution to $(H_i, K_i)$, it suffices to observe that there can be no path between two vertices of $H_i$ that contains vertices that do not belong to $H_i$.
    As to the bandwidth, we claim that it is at most $n - 1$: we can set each vertex of $H_i$ in the interval $[n(i-1), ni - 1]$ arbitrarily as long as $v_1(i)$ is place at $n(i-1)$ and $v_2(i)$ at $ni-1$, obtaining the mapping  $f$. Consequently, every edge $ab \in E(H_i)$ satisfies $|f(a) - f(b)| \leq n$ and each edge $v_2(i)v_1(i+1)$ satisfies $f(v_1(i+1)) - f(v_2(i)) = n(i + 1 - 1) - (ni - 1) = 1$.
\end{proof}

\subsection{Vertex cover}

In this section, we show that \pname{Hamiltonian Path} on cubic graphs OR-cross-composes into \pname{$k$-in-a-tree} parameterized by vertex cover and number of terminals.
Our construction, however, can be trivially adapted to different parameterizations, such as distance to clique.
In both cases, we heavily rely on the original gadget by Derhy and Picouleau~\cite{induced_trees_complexity}, but  make some modifications to suit our needs.
Let $H$ be an instance of \pname{Hamiltonian Path} on cubic graphs.
The \textit{Derhy-Picouleau graph} of $H$, which we denote by $\DPx(H)$, is constructed as follows:
for each $v_i \in V(H)$, add to $\DPx(H)$ one copy $T_i$ of the gadget depicted in Figure~\ref{fig:vertex_gadget} and, for each edge $v_iv_j \in E(H)$, connect one of the black vertices of $T_i$ to one of the black vertices of $T_j$ so that the degree of each black vertex of $\DPx(H)$ is three.
We say that $T_i$ and $T_j$ are \textit{adjacent} if there is an edge between a black vertex $\alpha_i$ of $T_i$ and a black vertex $\beta_j$ of $T_j$, where $\{\alpha, \beta\} \subset \{a,b,c\}$.
The set of mandatory vertices of $\DPx(H)$ is the set of gray vertices.

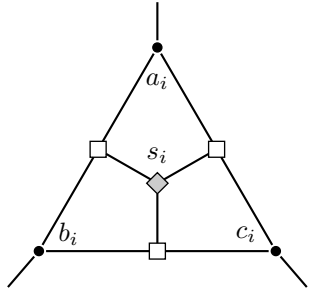
\begin{figure}[!htb]
    \centering
        \begin{tikzpicture}[scale=1.5]
            %\draw[help lines] (-5,-2) grid (5,2);
            \GraphInit[unit=3,vstyle=Normal]
            \SetVertexNormal[Shape=circle, FillColor = black, MinSize=3pt]
            \tikzset{VertexStyle/.append style = {inner sep = \inners,outer sep = \outers}}
            \SetVertexLabelOut
            \Vertex[a=90, d=1.2, Lpos=60, NoLabel]{a}
            \Vertex[a=210, d=1.2, Lpos=270, NoLabel]{b}
            \Vertex[a=330, d=1.2, Lpos=270, NoLabel]{c}
            \node at (90:0.9) {$a_i$};
            \node at (210:0.9) {$b_i$};
            \node at (330:0.9) {$c_i$};
            \begin{scope}
                \SetVertexNormal[Shape=coordinate, MinSize=3pt]
                \Vertex[a=90, d=1.6, Lpos=60, NoLabel]{a1}
                \Vertex[a=215, d=1.6, Lpos=270, NoLabel]{b1}
                \Vertex[a=325, d=1.6, Lpos=270, NoLabel]{c1}
                \Edges(a,a1)
                \Edges(b,b1)
                \Edges(c,c1)
            \end{scope}
            \begin{scope}
                \SetVertexNormal[Shape=rectangle, FillColor = white, MinSize=6pt]
                \SetVertexNoLabel
                \Vertex[a=150, d=0.6]{ab}
                \Vertex[a=270, d=0.6]{bc}
                \Vertex[a=30, d=0.6]{ac}
            \end{scope}
            
            \begin{scope}
                \SetVertexNormal[Shape=diamond, FillColor = gray!40, MinSize=7pt]
                \Vertex[x=0, y=0, Lpos=90, Math]{s_i}
            \end{scope}
            
            \Edges(ac,a,ab,b,bc,c,ac,s_i)
            \Edges(ab,s_i,bc)
        \end{tikzpicture}
        \caption{Vertex gadget $T_i$ for vertex $v_i$. Vertex $s_i$ is the only terminal of this gadget; white vertices are part of an independent set of maximum size. \label{fig:vertex_gadget}}
\end{figure}

Before presenting the composition itself, we need to make some slight modifications to $\DPx(H)$, to obtain what we dubbed the \textit{representative graph} of $H$.
Ultimately, our goal is to overlay the multiple instances of \pname{Hamiltonian Path} and, by applying an instance selector gadget, force the graph representing the selected instance to emerge from the confounding structure.

\subsubsection{Representative Graph}

Our key modification to $\DPx(H)$ is to replace the edge between black vertices with edge gadgets.
Suppose that $v_iv_j \in E(H)$, $i < j$, and that $\alpha_i\beta_j \in \DPx(H)$.
We replace the latter edge with the four vertex gadget $e(i, j, \alpha, \beta)$ as in Figure~\ref{fig:edge_gadget}.
Note that $e(i,j, \alpha, \beta)$ and $e(i,j, \beta, \alpha)$ are different gadgets whenever $\alpha \neq \beta$.
By doing this for every edge of $H$, we obtain the \textit{representative graph} of $H$, denoted by $\Rep(H)$. 
Intuitively, if $c_i,b_j$ are in the solution of the  \pname{$k$-in-a-tree} instance given by $\DPx(H)$, then $g_{ij}^{cb}$ is not in the solution of the instance whose input is $\Rep(H)$.
If either $c_i$ or $b_j$ are not in the solution, $g_{ij}^{cb}$ acts as a garbage collector and is used to connect $s_j$ and $s_{ij}^{cb}$.

\begin{figure}[!htb]
    \centering
        \begin{tikzpicture}[scale=1.5]
            %\draw[help lines] (-5,-2) grid (5,2);
            \GraphInit[unit=3,vstyle=Normal]
            \SetVertexNormal[Shape=circle, FillColor = black, MinSize=3pt]
            \tikzset{VertexStyle/.append style = {inner sep = \inners,outer sep = \outers}}
            \SetVertexLabelOut
            \begin{scope}[xshift=-2.5cm]
                \Vertex[a=90, d=1.2, Lpos=60, NoLabel]{a1}
                \Vertex[a=210, d=1.2, Lpos=270, NoLabel]{b1}
                \Vertex[a=330, d=1.2, Lpos=270, NoLabel]{c1}
                \node at (90:0.9) {$a_i$};
                \node at (210:0.9) {$b_i$};
                \node at (330:0.9) {$c_i$};
                \begin{scope}
                    \SetVertexNormal[Shape=rectangle, FillColor = white, MinSize=6pt]
                    \SetVertexNoLabel
                    \Vertex[a=150, d=0.6, Lpos=150, Math, L={a_i}]{ab1}
                    \Vertex[a=270, d=0.6, Lpos=270, Math, L={b_i}]{bc1}
                    \Vertex[a=30, d=0.6, Lpos=30, Math, L={c_i}, Ldist=-1pt]{ac1}
                \end{scope}
                
                \begin{scope}
                    \SetVertexNormal[Shape=diamond, FillColor = gray!40, MinSize=7pt]
                    \Vertex[x=0, y=0, Lpos=90, Math,L={s_i}]{s_i1}
                \end{scope}
                
            \begin{scope}
                \SetVertexNormal[Shape=coordinate, MinSize=3pt]
                \Vertex[a=90, d=1.6, Lpos=60, NoLabel]{at1}
                \Vertex[a=215, d=1.6, Lpos=270, NoLabel]{bt1}
                \Edges(a1,at1)
                \Edges(b1,bt1)
            \end{scope}
                
                \Edges(ac1,a1,ab1,b1,bc1,c1,ac1,s_i1)
                \Edges(ab1,s_i1,bc1)
            \end{scope}
            
            \begin{scope}[xshift=2.5cm]
                \Vertex[a=90, d=1.2, Lpos=60, NoLabel]{a2}
                \Vertex[a=210, d=1.2, Lpos=270, NoLabel]{b2}
                \Vertex[a=330, d=1.2, Lpos=270, NoLabel]{c2}
                \node at (90:0.9) {$a_j$};
                \node at (210:0.9) {$b_j$};
                \node at (330:0.9) {$c_j$};
                \begin{scope}
                    \SetVertexNormal[Shape=rectangle, FillColor = white, MinSize=6pt]
                    \SetVertexNoLabel
                    \Vertex[a=150, d=0.6, Lpos=150, Math, L={a_j}]{ab2}
                    \Vertex[a=270, d=0.6, Lpos=270, Math, L={b_j}]{bc2}
                    \Vertex[a=30, d=0.6, Lpos=30, Math, L={c_j}, Ldist=-1pt]{ac2}
                \end{scope}
                
                \begin{scope}
                    \SetVertexNormal[Shape=coordinate, MinSize=3pt]
                    \Vertex[a=90, d=1.6, Lpos=60, NoLabel]{at2}
                    \Vertex[a=325, d=1.6, Lpos=270, NoLabel]{ct2}
                    \Edges(a2,at2)
                    \Edges(c2,ct2)
                \end{scope}
                \begin{scope}
                    \SetVertexNormal[Shape=diamond, FillColor = gray!40, MinSize=7pt]
                    \Vertex[x=0, y=0, Lpos=90, Math,L={s_j}]{s_j}
                \end{scope}
                
                \Edges(ac2,a2,ab2,b2,bc2,c2,ac2,s_j)
                \Edges(ab2,s_j,bc2)
            \end{scope}
            
            \begin{scope}[yshift=-0.2cm]
                \begin{scope}
                    \SetVertexNormal[Shape=diamond, FillColor = gray!40, MinSize=7pt]
                    \Vertex[x=0, y=-1, Math, Lpos=270, L={s_{ij}^{cb}}]{s}
                \end{scope}
                \Vertex[x=-0.6, y=-0.6,NoLabel]{l}
                \Vertex[x=0.6, y=-0.6,NoLabel]{r}
                \Vertex[x=0, y=0.2, Lpos=90, Math, L={g_{ij}^{cb}}]{g}+
                \node at (-0.65, -0.9) {$p_{ij}^{cb}$};
                \node at (0.65, -0.9) {$q_{ij}^{cb}$};
            \end{scope}
            \Edges(c1,l,s,r,b2)
            \Edges(s,g,l)
            \Edges(g,r)
            \Edge[style={bend right=10}](g)(s_j)
        \end{tikzpicture}
        \caption{Edge gadget $e(i,j,c,b)$ for edge $v_iv_j \in E(H)$ ($i < j$).\label{fig:edge_gadget}}
\end{figure}
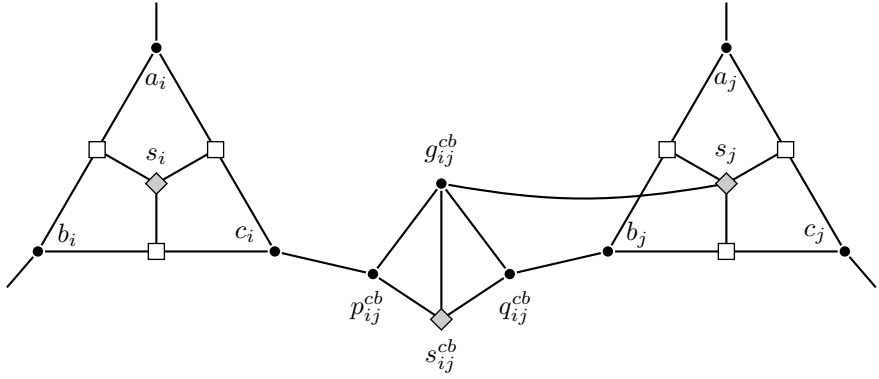

\begin{lemma}
    \label{lem:dp_rep_eq}
    There is an induced tree connecting the terminal vertices of $\DPx(H)$ if and only if there is an induced tree connecting the terminal vertices of $\Rep(H)$.
\end{lemma}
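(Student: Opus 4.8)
The plan is to prove both implications by local surgery on a solution, exchanging each edge gadget of $\Rep(H)$ for the corresponding picture in $\DPx(H)$ and vice versa.

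For the forward implication, I would start from an induced tree $T$ of $\DPx(H)$ containing all the terminals $s_i$ and build an induced tree $T'$ of $\Rep(H)$ as follows: keep every vertex of $T$, and for each edge $v_iv_j \in E(H)$ with black-vertex edge $\alpha_i\beta_j$, add $p_{ij}^{\alpha\beta}, s_{ij}^{\alpha\beta}, q_{ij}^{\alpha\beta}$ (but \emph{not} $g_{ij}^{\alpha\beta}$) when $\alpha_i\beta_j \in E(T)$, and add only $g_{ij}^{\alpha\beta}, s_{ij}^{\alpha\beta}$ otherwise. In the first case this merely subdivides the edge $\alpha_i\beta_j$ into the chordless path $\alpha_i - p_{ij}^{\alpha\beta} - s_{ij}^{\alpha\beta} - q_{ij}^{\alpha\beta} - \beta_j$ (chordless since $g_{ij}^{\alpha\beta}$ is excluded and $\alpha_i\beta_j$ is no longer an edge of $\Rep(H)$); in the second case it attaches the pendant path $s_j - g_{ij}^{\alpha\beta} - s_{ij}^{\alpha\beta}$ to $s_j$, which belongs to $T$ because it is a terminal. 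Both operations preserve being an induced tree and connect the new terminal, so $T'$ is a solution of $\Rep(H)$.

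For the converse, given any induced tree $S$ of $\Rep(H)$ spanning the terminals, I would first replace it by the minimal subtree $T'$ of $S$ spanning the terminals; since in a forest every connected subgraph is induced, $T'$ is still an induced tree of $\Rep(H)$ containing all terminals, and now every leaf of $T'$ is a terminal. The heart of the argument is to pin down $T' \cap \{p_{ij}^{\alpha\beta}, q_{ij}^{\alpha\beta}, g_{ij}^{\alpha\beta}, s_{ij}^{\alpha\beta}\}$ for each edge gadget, via two observations. First, these four vertices induce $K_4$ minus the edge $p_{ij}^{\alpha\beta}q_{ij}^{\alpha\beta}$, so, as $T'$ is acyclic and must contain the terminal $s_{ij}^{\alpha\beta}$, the intersection is one of $\{s_{ij}^{\alpha\beta}, g_{ij}^{\alpha\beta}\}$, $\{s_{ij}^{\alpha\beta}, p_{ij}^{\alpha\beta}\}$, $\{s_{ij}^{\alpha\beta}, q_{ij}^{\alpha\beta}\}$, or the path $\{s_{ij}^{\alpha\beta}, p_{ij}^{\alpha\beta}, q_{ij}^{\alpha\beta}\}$ (every configuration spanning a triangle through $g_{ij}^{\alpha\beta}$ is forbidden). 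Second, since every $s_i$ is a terminal and hence lies in $T'$, a black vertex $\alpha_i$ cannot keep both of its white neighbours in $T'$ without closing a $C_4$ through $s_i$; as $\alpha_i$ is not a terminal it is not a leaf of the minimal tree $T'$, so $\deg_{T'}(\alpha_i) = 2$, namely one white neighbour and the gadget vertex $p_{ij}^{\alpha\beta}$. Hence $\alpha_i \in V(T')$ iff $p_{ij}^{\alpha\beta} \in V(T')$ (and $\beta_j \in V(T')$ iff $q_{ij}^{\alpha\beta} \in V(T')$), which forces every gadget into exactly one of three states: \emph{unused}, where $\alpha_i, \beta_j \notin V(T')$ and $s_{ij}^{\alpha\beta}$ hangs off $s_j$ through $g_{ij}^{\alpha\beta}$; \emph{half-used}, where exactly one of $\alpha_i, \beta_j$ lies in $V(T')$ and $s_{ij}^{\alpha\beta}$ hangs off it through $p_{ij}^{\alpha\beta}$ or $q_{ij}^{\alpha\beta}$; and \emph{fully used}, where both lie in $V(T')$ and $T'$ contains the whole path $\alpha_i - p_{ij}^{\alpha\beta} - s_{ij}^{\alpha\beta} - q_{ij}^{\alpha\beta} - \beta_j$.

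With the trichotomy in hand I would set $T = \DPx(H)[V(T') \cap V(\DPx(H))]$ and argue it is a solution: its only edges between distinct vertex gadgets are the $\alpha_i\beta_j$ coming from fully-used gadgets, for each of which $T'$ contains an internally disjoint path between $\alpha_i$ and $\beta_j$ (internally disjoint and with distinct endpoints because each black vertex meets only one edge gadget), while half-used and unused gadgets contribute only pendant pieces of $T'$. Thus $T$ is obtained from $T'$ by contracting those internally disjoint paths to single edges and deleting the pendant pieces, so it is a tree; it is induced by construction, and each $s_i$ lies in $V(T')$, hence in $V(T)$. The main obstacle is precisely this converse: a careless ``delete the gadget vertices'' map would reinsert an edge $\alpha_i\beta_j$ whenever both of its endpoints happen to lie in the solution, possibly creating a cycle, and it is the passage to a minimal solution together with the ``a black vertex forces its incident gadget vertex'' observation that rules this out; verifying that the three states are mutually exclusive and exhaustive is where the technical work concentrates.
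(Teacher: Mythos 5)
Your proposal is correct and takes essentially the same route as the paper: the forward direction uses the identical gadget-by-gadget replacement (add $p_{ij}^{\alpha\beta}, s_{ij}^{\alpha\beta}, q_{ij}^{\alpha\beta}$ when the black edge is used, and $g_{ij}^{\alpha\beta}, s_{ij}^{\alpha\beta}$ otherwise), and the converse, like the paper's, passes to a minimal solution and then analyzes each edge gadget locally before intersecting with $V(\DPx(H))$. Your trichotomy of gadget states (unused/half-used/fully used), obtained from the leaf and $C_4$/triangle arguments, just makes explicit the step the paper treats tersely via minimality, namely that both black endpoints being present forces the whole induced $P_5$, so restoring the edge $\alpha_i\beta_j$ cannot create a cycle.
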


\begin{proof}
    Let $S$ be a solution to $\DPx(H)$; we construct the solution $S'$ to $\Rep(H)$ as follows.
    $S'$ contains every vertex in $S$.
    For every pair of adjacent black vertices $\alpha_i,\beta_j$ add $s_{ij}^{\alpha\beta}$ to $S'$; if both $\alpha_i$ and $\beta_j$ are in $S$, add $\{p_{ij}^{\alpha\beta}, q_{ij}^{\alpha\beta}\}$ to $S'$; otherwise add $g_{ij}^{\alpha\beta}$ to $S'$; this concludes the definition of $S'$.
    To see that $S'$ induces a tree of $\Rep(H)$, note that each path $\langle s_i$, white vertex, $\alpha_i$, $\beta_j$, white vertex, $s_j\rangle$ effectively had edge $\alpha_i\beta_j$ replaced by an induced $P_5$; furthermore, $g_{ij}^{\alpha\beta}$ is in $S'$ if and only if $\{\alpha_i, \beta_j\} \nsubseteq S$, so no cycle can be formed in the edge gadget; since $S$ induces a tree of $\DPx(H)$, we conclude that $S'$ induces a tree of $\Rep(H)$.
    Finally, $S'$ contains all terminal (gray) vertices of $\Rep(H)$: all such vertices also in $\DPx(H)$ were already connected, while the new ones are either included in the induced $P_5$'s with endpoints $\alpha_i,\beta_j$, or are connected by $g_{ij}^{\alpha\beta}$ to $s_j$ (assuming $i < j$).
    
    For the converse, suppose $S' \subset V(\Rep(H))$ induces a tree of $\Rep(H)$.
    Note that we can assume that $S'$ is minimal; in particular, we may safely assume that every black $\alpha_i$ vertex in $S'$ is used to connect $s_i$ to some other $s_j$ or, at the very least, to some terminal of an edge gadget.
    With this restriction in mind, we obtain our solution $S$ to $\DPx(H)$ by setting $S := S' \cap V(\DPx(H))$; note that, if $S'$ is not minimal, there could be a pair of vertices $\alpha_i \in V(T_i)$, $\beta_j \in v(T_j)$ with $\alpha_i\beta_j \in E(\DPx(H))$, which would imply that $S$ was not acyclic.
    Towards showing that $S$ induces a tree of $\DPx(H)$, let $s_i, s_j$ be two terminals of $\DPx(H)$ and $P_\Rep(i,j)$ be the unique path between them in the subgraph of $\Rep(H)$ induced by $S'$.
    We claim that there is no vertex $g_{\ell r}^{\alpha\beta}$ in $P_\Rep(i,j)$:  $g_{\ell r}^{\alpha\beta} \in S'$ implies that it is the unique neighbor of $s_{\ell r}^{\alpha\beta}$ in $S'$, otherwise $S'$ would not induce a tree of $\Rep(H)$.
    Consequently, $P_\Rep(i,j) \cap V(\DPx(H))$ induces a path of $\DPx(H)$ and, since $\bigcup_{i,j \in [n]}P_\Rep(i,j)$ induce a tree of $\Rep(H)$, we conclude that $\bigcup_{i,j \in [n]}P_\Rep(i,j) \cap V(\DPx(H))$ induces a tree of $\DPx(H)$.
\end{proof}

\subsubsection{The OR-cross-composition}

For the remainder of this section, we assume that we are given $t$ instances $\mathcal{H} = \{H_1, \dots, H_t\}$ of \pname{Hamiltonian Path} such that each $H_\ell \in \mathcal{H}$ is cubic and has $n$ vertices, and the input we construct to \pname{$k$-in-a-tree} is the graph $G$.
What we are going to do now is overlay the representative graphs of $\mathcal{H}$ while avoiding additional vertex gadgets and maintaining only a few additional copies of the edge gadgets.
To this end, if $\alpha_i\beta_j \in E(\DPx(H_\ell))$, we say that edge $v_iv_j$ of $H_\ell$ is \textit{represented} by the ordered pair $(\alpha, \beta)$; in a slight abuse of notation, we write $\Rep_\ell(v_iv_j) = (\alpha, \beta)$, where $\{\alpha, \beta\} \subset \{a,b,c\}$.

Formally, $G$ initially has $n$ copies $\{T_1, \dots, T_n\}$ of the vertex gadget given in Figure~\ref{fig:vertex_gadget}.
For each $\ell \in [t]$, define $\mathcal{E}_\ell$ as the set of edge gadgets of $\Rep(H_\ell)$, i.e  $e(i,j,\alpha, \beta) \in \mathcal{E}_\ell$ if and only if $v_iv_j \in E(H)$ and $\Rep_\ell(v_iv_j) = (\alpha, \beta)$.
We update $G$ to include all vertex gadgets and all edge gadgets contained in some $\mathcal{E}_\ell$.
It is critical to note that $\bigcup_{\ell \in [t]} \mathcal{E}_\ell$ has $\bigO{n^2}$ elements.
For our instance selector gadget, we have a copy of $K_{2,t}$ with bipartition $(X, Y)$, where each of the $t$ vertices of $Y$ corresponds to one instance in $\mathcal{H}$, both vertices of $X$ are terminal vertices, and one of them is identified with the terminal vertex $s_1$ of $T_1$.
To conclude the construction of $G$, for each $y_\ell \in Y$ and edge gadget $e(i, j, \alpha, \beta) \notin \mathcal{E}_\ell$, we add all edges between $y_\ell$ and the four vertices of $e(i,j, \alpha, \beta)$, as we show in Figure~\ref{fig:inst_gadget}.

\begin{figure}[!htb]
    \centering
        \begin{tikzpicture}[scale=1.5]
            %\draw[help lines] (-5,-2) grid (5,2);
            \GraphInit[unit=3,vstyle=Normal]
            \SetVertexNormal[Shape=circle, FillColor = black, MinSize=3pt]
            \tikzset{VertexStyle/.append style = {inner sep = \inners,outer sep = \outers}}
            \SetVertexLabelOut
            \begin{scope}
                \SetVertexNormal[Shape=diamond, FillColor = gray!40, MinSize=7pt]
                \Vertex[x=0, y=-1, Math, Lpos=270, L={s_{ij}^{cb}}]{s}
            \end{scope}
            \Vertex[x=-0.6, y=-0.6,NoLabel]{l}
            \Vertex[x=0.6, y=-0.6,NoLabel]{r}
            \Vertex[x=0, y=0.2, Lpos=90, Math, L={g_{ij}^{cb}}]{g}
            \node at (-0.65, -0.9) {$p_{ij}^{cb}$};
            \node at (0.65, -0.9) {$q_{ij}^{cb}$};
            \Edges(l,s,r)
            \Edges(s,g,l)
            \Edges(g,r)
        
            \begin{scope}
                \SetVertexNormal[Shape=rectangle, FillColor = white, MinSize=6pt]
                \Vertex[x=2.1,y=0.2, Math, NoLabel]{y_1}
                \Vertex[x=2.1,y=-0.4, Math, Lpos=90, L={y_\ell}]{y_l}
                \Vertex[x=2.1,y=-1, Math, NoLabel]{y_t}
            \end{scope}
            
            \begin{scope}
                \SetVertexNormal[Shape=diamond, FillColor = gray!40, MinSize=7pt]
                \Vertex[x=3.1, y=-0.1, Math, Lpos=0]{s_1}
                \Vertex[x=3.1, y=-0.7, NoLabel]{x}
                \Edges(x,y_1,s_1, y_l, x, y_t, s_1)
            \end{scope}
            \Edges(g, y_l, r)
            \Edge[style={bend right=10}](y_l)(l)
            \Edge[style={bend left=35}](y_l)(s)
        \end{tikzpicture}
        \caption{Interaction between the instance selector gadget and the edge gadget $e(i,j,c,b)$ ($i < j$), if $e(i, j, c, b) \notin \mathcal{E}_\ell$.\label{fig:inst_gadget}}
\end{figure}
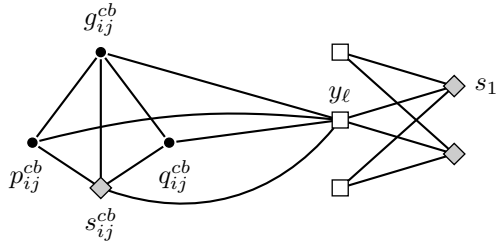

\begin{lemma}
    \label{lem:vc_bound}
    The graph $G$ has a vertex cover of size $\bigO{n^2}$ and at most $\bigO{n^2}$ terminals.
\end{lemma}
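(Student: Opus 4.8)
The plan is to exhibit an explicit vertex cover of $G$ and count its size, then separately count the terminals; both are direct bookkeeping once the cover is chosen well. Let $C$ consist of all black and white vertices of the $n$ vertex gadgets $T_1,\dots,T_n$, together with all four vertices of every edge gadget in $\bigcup_{\ell\in[t]}\mathcal{E}_\ell$, together with the two vertices of $X$. Each vertex gadget contributes exactly six non-terminal vertices, and $\bigcup_{\ell\in[t]}\mathcal{E}_\ell$ has $\bigO{n^2}$ elements (as already noted in the construction), so $|C| \le 6n + \bigO{n^2} + 2 = \bigO{n^2}$, independent of $t$.

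Next I would check that $C$ is a vertex cover by inspecting the edge types of $G$: (i) the edges internal to a vertex gadget $T_i$ each have an endpoint among $\{a_i,b_i,c_i\}$ or the three white vertices, hence in $C$; (ii) the edges internal to an edge gadget have both endpoints in $C$; (iii) the edges joining a black vertex of some $T_i$ to a vertex of an edge gadget (such as $c_ip_{ij}^{cb}$ and $b_jq_{ij}^{cb}$, and the analogues for other choices of black vertices) have both endpoints in $C$; (iv) each edge $g_{ij}^{\alpha\beta}s_j$ has its endpoint $g_{ij}^{\alpha\beta}$ in $C$; (v) the edges of the $K_{2,t}$ selector have an endpoint in $X\subseteq C$; and (vi) the edges added between $y_\ell\in Y$ and the four vertices of an edge gadget $e(i,j,\alpha,\beta)\notin\mathcal{E}_\ell$ have their edge-gadget endpoint in $C$. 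Equivalently, $V(G)\setminus C = \{s_2,\dots,s_n\}\cup Y$ (recall $s_1$ is identified with a vertex of $X$), and this set is independent: no two $s_i$ are adjacent, $Y$ is independent in $K_{2,t}$, and no $y_\ell$ is adjacent to any $s_i$; hence $C$ covers every edge.

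Finally I would count the terminals: the gray vertices are the $s_i$ for $i\in[n]$, one vertex $s_{ij}^{\alpha\beta}$ per edge gadget (so $\bigO{n^2}$ of them), and the second vertex of $X$, for a total of $\bigO{n^2}$.

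The argument is entirely routine; the only step needing attention is making sure no edge type is overlooked, in particular the edges incident to the set $Y$, whose size $t$ is unbounded: it is precisely the fact that these edges are covered by the $\bigO{n^2}$ edge-gadget vertices (rather than by $Y$ itself) that keeps $|C|$ from depending on $t$. Once one records that $\{s_2,\dots,s_n\}\cup Y$ is the whole of $V(G)\setminus C$ and checks that it is independent, the lemma follows.
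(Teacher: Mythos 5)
Your proof is correct and takes essentially the same route as the paper: the paper simply observes that $Y$ is independent, so $V(G)\setminus Y$ (of size $7n + \bigO{n^2} + 1$) is a vertex cover, and that all terminals lie in $V(G)\setminus Y$. Your cover is a slight refinement (also omitting the gray vertices $s_2,\dots,s_n$) with the edge types checked explicitly, but the substance — that $Y$, the only part whose size depends on $t$, need not enter the cover and contains no terminals — is identical.
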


\begin{proof}
    Note that $G$ has $7n$ vertices in vertex gadgets, at most $18(n^2 - 3n)$ vertices in edge gadgets, and $t + 1$ vertices in the instance selector gadget (recall that one vertex of $X$ is identified with a terminal of $T_1$).
    Since $Y$ is an independent set, $V(G) \setminus Y$ is a vertex cover with $\bigO{n^2}$ elements.
    For the last part of the statement, it suffices to observe that the set of terminals of $G$ is a subset of $V(G) \setminus Y$.
\end{proof}

\begin{lemma}
    \label{lem:rep_emerge}
    There is no induced tree of $G$ connecting all terminal vertices with zero or more than one vertex of $Y$. Moreover, if $y_\ell \in Y$ is fixed, the graph obtained after removing $X$, $Y$, and all vertices that are in a triangle with $y_\ell$ is precisely $\Rep(H_\ell)$. 
\end{lemma}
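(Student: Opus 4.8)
The plan is to prove the two assertions of the lemma separately, both by reading off the neighbourhoods dictated by the construction; the only arithmetic involved is verifying that $K_{2,t}$ forbids two vertices of $Y$ and that deleting the ``wrong'' edge gadgets strips a black vertex down to exactly its $\DPx(H_\ell)$-incidence.

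\textbf{No solution with $0$ or $\geq 2$ vertices of $Y$.} Let $T$ be an induced tree of $G$ connecting all terminals. Since both vertices of $X$ are terminals, they lie in $V(T)$. If $V(T)$ contained two distinct $y_\ell,y_{\ell'}\in Y$, then $\{y_\ell,y_{\ell'}\}\cup X$ would induce a $C_4$ in $G$: all four $K_{2,t}$ edges between these vertices are present, while $X$ and $Y$ are independent sets. An induced subgraph containing a $C_4$ is not a tree, a contradiction, so $|V(T)\cap Y|\le 1$. Conversely, let $x$ be the vertex of $X$ not identified with $s_1$; by construction $N_G(x)=Y$, so if $V(T)\cap Y=\emptyset$ then $x$ is an isolated vertex of $T$, which is impossible because $G$ has other terminals (e.g. each $s_i$) that $T$ must also contain. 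Hence $|V(T)\cap Y|=1$.

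\textbf{What remains after the deletions.} Fix $y_\ell\in Y$. I would first pin down $N_G(y_\ell)$ directly from the construction: $y_\ell$ is adjacent exactly to the two vertices of $X$ and, for every edge gadget $e(i,j,\alpha,\beta)\notin\mathcal{E}_\ell$, to all four of its vertices $g_{ij}^{\alpha\beta},p_{ij}^{\alpha\beta},q_{ij}^{\alpha\beta},s_{ij}^{\alpha\beta}$; it is adjacent to no vertex of a vertex gadget, to no vertex of an edge gadget in $\mathcal{E}_\ell$, and to no other vertex of $Y$. A vertex $v\neq y_\ell$ lies in a triangle with $y_\ell$ iff $v\in N_G(y_\ell)$ and $v$ has a neighbour inside $N_G(y_\ell)$. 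Inspecting the internal edges of an edge gadget (as drawn in Figures~\ref{fig:edge_gadget} and~\ref{fig:inst_gadget}, $s_{ij}^{\alpha\beta}$ is adjacent to the other three and $g_{ij}^{\alpha\beta}$ is adjacent to $p_{ij}^{\alpha\beta}$ and $q_{ij}^{\alpha\beta}$), each of the four vertices has a neighbour inside its own gadget and hence inside $N_G(y_\ell)$, so all four vertices of every edge gadget outside $\mathcal{E}_\ell$ get deleted. No vertex of a vertex gadget and no vertex of an edge gadget in $\mathcal{E}_\ell$ is adjacent to $y_\ell$, hence none of them lies in a triangle with $y_\ell$. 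Therefore the set of deleted vertices is precisely $X\cup Y$ together with all vertices of the edge gadgets not in $\mathcal{E}_\ell$, and the graph left behind is the union of the vertex gadgets $T_1,\dots,T_n$ and the edge gadgets of $\mathcal{E}_\ell$, with all edges among them retained.

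\textbf{Identifying this graph with $\Rep(H_\ell)$, and the main obstacle.} By definition $\Rep(H_\ell)$ is exactly the vertex gadgets for $V(H_\ell)$ wired together through the edge gadgets of $\mathcal{E}_\ell$, so the remaining graph equals $\Rep(H_\ell)$; the only bookkeeping point is the identification of $s_1$ with a vertex of $X$, which I would handle by agreeing that $s_1$ (being a genuine vertex of $T_1$) is not removed with $X$, or equivalently by re-inserting it. The step that actually needs care is checking that deleting the edge gadgets outside $\mathcal{E}_\ell$ removes precisely the spurious incidences of the black vertices and leaves every black vertex of a gadget $T_i$ with exactly the one edge-gadget neighbour it has in $\DPx(H_\ell)$: this is where cubicity of $H_\ell$ enters, since each black vertex of $T_i$ belongs to exactly one edge of $H_\ell$, hence to exactly one edge gadget of $\mathcal{E}_\ell$, so that the surviving degrees match those of $\Rep(H_\ell)$. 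I expect this degree-bookkeeping together with the $s_1$-identification to be the only real obstacle; the rest is a straightforward reading of the gadget definitions.
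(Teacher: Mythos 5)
Your proposal is correct and follows essentially the same route as the paper: the first claim is handled via the induced $C_4$ on $\{x,y_\ell,s_1,y_{\ell'}\}$ and the fact that $N(x)=Y$, and the second by observing that the vertices forming a triangle with $y_\ell$ are exactly those of the edge gadgets outside $\mathcal{E}_\ell$, whose removal leaves the vertex gadgets wired through $\mathcal{E}_\ell$, i.e.\ $\Rep(H_\ell)$. Your extra bookkeeping (verifying the triangle condition in both directions, the $s_1$/$X$ identification, and the cubicity-based degree count) only makes explicit what the paper's proof asserts implicitly.
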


\begin{proof}
    If no vertex of $Y$ is picked, then there is no path between the two terminals $s_1, x \in X$ precisely because $N(x) = Y$.
    If two vertices $y_\ell, y_p \in Y$ are picked, then $\{x, y_\ell, s_1, y_p\}$ is an induced $C_4$.
    
    Now, let $e(i,j, \alpha, \beta)$ be an edge gadget of $G$.
    For the second part of the statement, recall that $ V(e(i, j, \alpha, \beta)) \subseteq N(y_\ell) \setminus X$ if and only if $e(i, j, \alpha, \beta) \notin \mathcal{E}_\ell$, i.e the vertices $W_\ell$ of $G$ that form a triangle with $y_\ell$ are precisely those that belong to the edge gadgets $e(i, j, \alpha, \beta)$ \textit{not} present in $\Rep(H_\ell)$.
    As such, the induced subgraph of $G$ that remains after the removal of $W_\ell$, $X$ and $Y$ is composed precisely of the vertex gadgets and $\mathcal{E}_\ell$; since no extra edges were added within either group of gadgets or between them, we have that $G \setminus (W_\ell \cup X \cup Y) = \Rep(H_\ell)$.
\end{proof}

Theorem~\ref{thm:no_kernel_vc} is a direct consequence of Lemmas~\ref{lem:dp_rep_eq},~\ref{lem:vc_bound}, and~\ref{lem:rep_emerge}.

\begin{theorem}
    \label{thm:no_kernel_vc}
    \pname{$k$-in-a-tree} does not admit a polynomial kernel when parameterized by the number of vertices of the induced tree, and size of a minimum vertex cover unless $\NP \subseteq \coNP/\poly$.
\end{theorem}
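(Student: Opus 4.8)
The plan is to establish an OR-cross-composition from \pname{Hamiltonian Path} on cubic graphs (which is \NPH) into \pname{$k$-in-a-tree} parameterized by the number of vertices of the target induced tree together with the vertex cover number; by the cross-composition framework of~\cite{cross_composition} and the results of~\cite{distillation,fortnow_santh}, this rules out a polynomial kernel unless $\NP\subseteq\coNP/\poly$. As the polynomial equivalence relation I would use ``both strings encode cubic graphs with the same number of vertices'', collapsing all malformed strings into one class that composes to a fixed no-instance; this has $\bigO{n}$ classes among strings of length at most $n$ and is polynomial-time decidable, so it suffices to compose $t$ instances $H_1,\dots,H_t$ that are all cubic on a common vertex set of size $n$. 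That is exactly the regime in which $G$ was built in this subsection: the $n$ vertex gadgets, one copy of every edge gadget occurring in some $\mathcal{E}_\ell$, and the $K_{2,t}$ instance selector with parts $X=\{x,s_1\}$ and $Y=\{y_1,\dots,y_t\}$. The output instance is $(G,K,s)$, where $K$ is the set of gray vertices and $s$ is a uniform $\bigO{n^2}$ budget on the number of vertices of the sought induced tree — the same value for every $H_\ell$, computable from $G$, and of size $\bigO{n^2}$ regardless of $t$ since $G$ has only $\bigO{n^2}$ edge gadgets.

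Next I would discharge the technical side conditions of cross-composition. Lemma~\ref{lem:vc_bound} already gives that $G$ has a vertex cover and a terminal set of size $\bigO{n^2}$, so the combined parameter is $\bigO{n^2}$, and since a cubic $n$-vertex graph needs $\Omega(n)$ bits to encode, this is polynomial in $\max_\ell|H_\ell|$. The construction runs in time polynomial in $\sum_\ell|H_\ell|$ because $|V(G)|=\bigO{n^2+t}$ and each slot assignment $\Rep_\ell$ is obtained greedily (only a local bijection between the three edges at each vertex of $H_\ell$ and its three black vertices is needed, no global consistency).

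Then comes the equivalence. For the easy direction, if some $H_\ell$ has a Hamiltonian path, then $\DPx(H_\ell)$ has an induced tree connecting its terminals by Derhy and Picouleau~\cite{induced_trees_complexity}, hence so does $\Rep(H_\ell)$ by Lemma~\ref{lem:dp_rep_eq}; given such a solution $T^+$, I would take $T^+\cup\{x,y_\ell\}\cup\{\,s_e : e\notin\mathcal{E}_\ell\,\}$, which is an induced tree of $G$ on at most $s$ vertices that contains $K$, since the only neighbour of $y_\ell$ inside $\Rep(H_\ell)$ is $s_1$, every other added vertex attaches to $y_\ell$ as a leaf, and $V(\Rep(H_\ell))$ is disjoint from the gadgets $e\notin\mathcal{E}_\ell$, so no cycle appears. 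For the converse, let $T$ be a solution to $(G,K,s)$; by Lemma~\ref{lem:rep_emerge}, $T$ contains exactly one $y_\ell\in Y$. Here I would use the following observation: for every edge gadget $e\notin\mathcal{E}_\ell$, each of its vertices $p_e,q_e,g_e$ is adjacent both to $y_\ell$ (because $e\notin\mathcal{E}_\ell$) and to the terminal $s_e$, and $s_e$ is itself adjacent to $y_\ell$; since $y_\ell,s_e\in V(T)$ and an induced tree is triangle-free, none of $p_e,q_e,g_e$ can belong to $T$, so $s_e$ is a pendant neighbour of $y_\ell$ in $T$ (its other potential neighbours, the remaining vertices of $Y$, are excluded by Lemma~\ref{lem:rep_emerge}). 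Deleting $y_\ell$, $x$, and all these pendant terminals from $T$ therefore leaves an induced subtree contained in $V(\Rep(H_\ell))$ that still connects all terminals of $\Rep(H_\ell)$; by Lemma~\ref{lem:dp_rep_eq} and~\cite{induced_trees_complexity}, $H_\ell$ has a Hamiltonian path. This completes the OR-cross-composition.

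The part I expect to be the main obstacle — and the reason Lemma~\ref{lem:rep_emerge} is stated as a structural fact about $G$ — is precisely the collapse of the confounding edge gadgets to isolated pendant leaves: one must rule out that, after $y_\ell$ is fixed, a solution could secretly route a vertex-gadget connection through some edge gadget absent from $\mathcal{E}_\ell$, which would correspond to using an edge not in $H_\ell$ and would break the reduction; the triangle argument above is what forbids this. Everything else is bookkeeping — checking that the two translations are mutually inverse up to irrelevant pendant vertices, and that a single $\bigO{n^2}$ value of $s$ serves the whole equivalence class.
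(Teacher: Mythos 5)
Your proposal is correct and follows essentially the same route as the paper: the paper proves Theorem~\ref{thm:no_kernel_vc} as a direct consequence of Lemmas~\ref{lem:dp_rep_eq}, \ref{lem:vc_bound}, and~\ref{lem:rep_emerge} via the very OR-cross-composition from \pname{Hamiltonian Path} on cubic graphs that you describe, and your triangle argument showing that edge gadgets outside $\mathcal{E}_\ell$ collapse to pendant terminals of $y_\ell$ is exactly the glue the paper leaves implicit. Your added bookkeeping (polynomial equivalence relation, uniform budget $s$, parameter and running-time bounds) is consistent with the construction and introduces no gap.
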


As for Corollary~\ref{cor:meta_no_kernel}, we observe that there is nothing special about set $Y$ of our instance selector gadget being an independent set; the key feature is that only one of them may be used in a solution; as such, we may freely encode a member of whichever graph classes we are interested in $G[Y]$.

\begin{corollary}
    \label{cor:meta_no_kernel}
    For every non-trivial graph class $\mathcal{G}$, \pname{$k$-in-a-tree} does not admit a polynomial kernel when parameterized by the number of vertices of the induced tree and size of a minimum $\mathcal{G}$-modulator unless $\NP \subseteq \coNP/\poly$.
\end{corollary}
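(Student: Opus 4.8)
The plan is to re-run the OR-cross-composition behind Theorem~\ref{thm:no_kernel_vc} essentially word for word, changing only what happens inside the instance-selector set $Y$. Given the $t$ cubic \pname{Hamiltonian Path} instances $\mathcal{H}=\{H_1,\dots,H_t\}$, each on $n$ vertices, I would build $G$ exactly as there: the $n$ vertex gadgets, the $\bigO{n^2}$ edge gadgets of $\bigcup_{\ell\in[t]}\mathcal{E}_\ell$, the pair $X$ with one of its vertices identified with $s_1$, and all edges from each $y_\ell\in Y$ to the four vertices of every edge gadget not in $\mathcal{E}_\ell$ (Figure~\ref{fig:inst_gadget}) --- with one modification: instead of leaving $Y$ edgeless, I would first fix a member $G_Y\in\mathcal{G}$ on \emph{exactly} $t$ vertices (which exists precisely because $\mathcal{G}$ is non-trivial) and install its edges on $Y$. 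Everything else, including the edges incident to $X\cup Y$ and all gadget internals, is untouched, and $G$ is still produced in time polynomial in $\sum_\ell|V(H_\ell)|$.

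The next step is to check that none of Lemmas~\ref{lem:dp_rep_eq}--\ref{lem:rep_emerge} breaks. Lemma~\ref{lem:dp_rep_eq} only concerns the vertex and edge gadgets and is untouched. For Lemma~\ref{lem:vc_bound}, the set $V(G)\setminus Y$ is exactly as before, so it still has $\bigO{n^2}$ vertices and still contains every terminal; but now $G-(V(G)\setminus Y)=G[Y]=G_Y\in\mathcal{G}$, so $V(G)\setminus Y$ is a $\mathcal{G}$-modulator, and since $|V(G)|=\bigO{n^2}$ any solution has $\bigO{n^2}$ vertices --- hence both parameters (induced-tree size and $\mathcal{G}$-modulator size) are polynomially bounded in $\max_\ell|V(H_\ell)|$. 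The only place the proof of Lemma~\ref{lem:rep_emerge} used that $Y$ is independent is the claim that no solution contains two vertices of $Y$; I would re-prove this directly: a solution $T$ must contain both terminals $x,s_1$ of $X$, so if $y_\ell,y_p\in T$ as well, then $G[\{x,s_1,y_\ell,y_p\}]$ contains a cycle --- a $C_4$ when $y_\ell y_p\notin E(G_Y)$, a triangle through $y_\ell$ when $y_\ell y_p\in E(G_Y)$ --- contradicting that $T$ is an induced tree. The second half of Lemma~\ref{lem:rep_emerge}, namely that deleting $X$, $Y$, and the triangle-neighbours of a fixed $y_\ell$ leaves exactly $\Rep(H_\ell)$, never refers to edges inside $Y$ since all of $Y$ is deleted, so it is literally unchanged.

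Finally I would assemble the equivalence as before: Lemma~\ref{lem:rep_emerge} forces exactly one $y_\ell$ into any solution and exposes $\Rep(H_\ell)$; Lemma~\ref{lem:dp_rep_eq} transfers to $\DPx(H_\ell)$; and $\DPx(H_\ell)$ admits a solution iff $H_\ell$ has a Hamiltonian path~\cite{induced_trees_complexity}. Hence $(G,K)$ is a positive instance of \pname{$k$-in-a-tree} iff some $H_\ell$ is, and with both parameters in $\bigO{n^2}$ this is a legitimate OR-cross-composition of \pname{Hamiltonian Path} on cubic graphs into the stated parameterization, so the frameworks of~\cite{cross_composition,distillation,fortnow_santh} exclude a polynomial kernel unless $\NP\subseteq\coNP/\poly$. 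I expect the only point needing genuine care to be the two-vertices-of-$Y$ argument: that is exactly where turning the non-edges of $Y$ into edges could conceivably ruin correctness, and the reason it does not is that every solution --- in either direction of the equivalence --- meets $Y$ in at most one vertex, so the edges of $G_Y$ are invisible to all of them; everything else is bookkeeping. Since $\mathcal{G}$ ranges over all non-trivial classes, specializing to edgeless graphs recovers the vertex-cover statement of Theorem~\ref{thm:no_kernel_vc} and specializing to complete graphs gives the distance-to-clique case.
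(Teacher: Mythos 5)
Your proposal is correct and takes essentially the same route as the paper, whose entire proof of the corollary is the remark that nothing in the construction uses the independence of $Y$ beyond the fact that no solution may contain two of its vertices, so one may freely install any member of $\mathcal{G}$ on $G[Y]$; your triangle/$C_4$ case analysis just makes that remark explicit. One small slip: $|V(G)|$ is not $\bigO{n^2}$ (since $|Y|=t$), but the bound on the tree-size parameter still holds by your own observation that a solution meets $Y$ in at most one vertex and hence lies in the $\bigO{n^2}$-vertex set $\left(V(G)\setminus Y\right)\cup\{y_\ell\}$.
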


\section{Concluding Remarks}
\label{sec:final}

In this work, we performed an extensive study of the parameterized complexity of \pname{$k$-in-a-tree} and the existence of polynomial kernels for the problem, motivated by the relevance of \pname{three-in-a-tree} in subgraph detection algorithms and a question of \citeauthor{induced_trees_complexity}~\cite{induced_trees_complexity} about the complexity of \pname{four-in-a-tree}.
We presented multiple positive and negative results on the problem, of which we highlight its \WHness{1}\ under the natural parameter, the linear kernel under feedback edge set, and the nonexistence of a polynomial kernel under vertex cover/distance to clique.
The main question about the complexity of \pname{$k$-in-a-tree} for \emph{fixed} $k$, however,  remains open; our hardness result showed that there is no $n^{o(k)}$ time algorithm under ETH, but no \XP\ algorithm is known to exist.
It is worthwhile to revisit some cases where \pname{three-in-a-tree} has not been successful to identify possible applications for \pname{$k$-in-a-tree}.
There are also no known running time lower bounds for the parameters we study, and determining whether or not we can obtain $2^{o(q)}n^\bigO{1}$ time algorithms seems a feasible research direction; still in terms of algorithmic results, it would be quite interesting to see how we can avoid Courcelle's Theorem to get an algorithm when parameterizing by cliquewidth.
The natural investigation of \pname{$k$-in-a-tree} on different graph classes may provide some insights on how to tackle particular cases, such as \pname{four-in-a-tree}; this study has already been started in~\cite{chordal_case} and in others -- such as cographs -- may even be trivial, but many other cases may be quite challenging and much still remains to be done.

\bibliographystyle{plainnat}
\bibliography{refs}

\end{document}